\DeclareMathOperator*{\argmin}{arg\,min}
\theoremstyle{plain}
\newtheorem{theorem}{Theorem}[section]
\newtheorem{corollary}[theorem]{Corollary}
\theoremstyle{definition}
\newtheorem{definition}[theorem]{Definition}
\theoremstyle{remark}
\definecolor{ocean}{RGB}{78, 168, 245}
\newcommand{\mnist}{{\tt MNIST}}
\newcommand{\cifar}{{\tt CIFAR10}}
\newcommand{\para}[1]{{\vspace{-1pt} \noindent \textbf{#1}
		\hspace{3pt}}}
\definecolor{ForestGreen}{RGB}{24,155,24}
\newif\iffinal
    \newcommand{\yuxinnote}[1]{}
    \newcommand{\yuxin}[1]{}
    \newcommand{\yuxinnote}[1]{\todo[fancyline,color=NavyBlue!40]{YC: #1}\xspace}
    \newcommand{\yuxin}[1]{\textcolor{NavyBlue}{[YC: #1]}}
\newcommand{\htedit}[1]{{\color{black} #1}}
\newcommand{\tvmodel}{time-varying model}
\newcommand{\TVModels}{Time-Varying Models}
\newcommand{\Tvmodels}{Time-varying models}
\newcommand{\tvmodels}{time-varying models}
\newcommand{\tabor}{{\it\tt TABOR}}
\newcommand{\nc}[1]{{\it\tt Neural Cleanse}}
\newcommand{\fp}{{\it\tt Fine-Pruning}}
\newcommand{\btheta}{\ensuremath{{\bm{\theta}}\xspace}}
\newcommand{\patch}{\textsc{patch}}
\icmltitlerunning{On the Permanence of Backdoors in Evolving Models}
\begin{document}

\twocolumn[
%\icmltitle{How Long Do Backdoors Last in Reality? \fixme{Rethink about terminology}}
\icmltitle{On the Permanence of Backdoors in Evolving Models}

% It is OKAY to include author information, even for blind
% submissions: the style file will automatically remove it for you
% unless you've provided the [accepted] option to the icml2022
% package.

% List of affiliations: The first argument should be a (short)
% identifier you will use later to specify author affiliations
% Academic affiliations should list Department, University, City, Region, Country
% Industry affiliations should list Company, City, Region, Country

% You can specify symbols, otherwise they are numbered in order.
% Ideally, you should not use this facility. Affiliations will be numbered
% in order of appearance and this is the preferred way.

\begin{icmlauthorlist}
\icmlauthor{Huiying Li}{uchi}
\icmlauthor{Arjun Nitin Bhagoji}{uchi}
\icmlauthor{Yuxin Chen}{uchi}
\icmlauthor{Haitao Zheng}{uchi}
\icmlauthor{Ben Y. Zhao}{uchi}
\end{icmlauthorlist}

\icmlaffiliation{uchi}{University of Chicago}

\icmlcorrespondingauthor{Huiying Li}{huiyingli@cs.uchicago.edu}

% You may provide any keywords that you
% find helpful for describing your paper; these are used to populate
% the "keywords" metadata in the PDF but will not be shown in the document
\icmlkeywords{Machine Learning, ICML}

\vskip 0.3in
]

% this must go after the closing bracket ] following \twocolumn[ ...

% This command actually creates the footnote in the first column
% listing the affiliations and the copyright notice.
% The command takes one argument, which is text to display at the start of the footnote.
% The \icmlEqualContribution command is standard text for equal contribution.
% Remove it (just {}) if you do not need this facility.

\printAffiliationsAndNotice{}  % leave blank if no need to mention equal contribution
%\printAffiliationsAndNotice{\icmlEqualContribution} % otherwise use the standard text.

\begin{abstract}

Existing research on training-time attacks for deep neural networks (DNNs), such as backdoors, largely assumes that models are static once trained, and  hidden backdoors  trained into models remain active indefinitely. In practice, models are rarely static but evolve continuously to address distribution drifts in the underlying data. 
%As a model evolves to learn new information, it may also forget previously learnt knowledge, including the injected backdoors. This motivates us to study the permanence of backdoors in 
This paper explores the behavior of backdoor attacks in {\bf \tvmodels}, whose model weights are continually updated via fine-tuning to adapt to data drifts. Our theoretical analysis shows how fine-tuning with fresh data progressively ``erases'' the injected backdoors, and our empirical study illustrates how quickly a time-varying model ``forgets'' backdoors under a variety of training and attack settings. We also show that novel fine-tuning strategies using smart learning rates can significantly accelerate backdoor forgetting.  Finally, we discuss the need for new backdoor defenses that target \tvmodels{} specifically.

\end{abstract}

% \vspace{-0.1in}
\section{Introduction}
\label{sec:intro}

%Given the dependence on large volumes of training
%data~\cite{parkhi2015deep, ILSVRC15}, machine %learning models, especially

Deep neural networks (DNNs) are vulnerable
to backdoor attacks, where attackers corrupt training data in order to introduce incorrect model behavior on inputs with specific
characteristics~\cite{chen2017targeted, gu2017badnets, liu2018trojaning}.
Backdoors are challenging to detect and mitigate, and are considered the most worrisome attacks by industry practitioners~\cite{kumar2020adversarial}.  Existing proposals to detect and 
mitigate backdoors (e.g.,~\cite{gao2019strip,  finepruning, liu2019abs, tran2018spectral, wang2019neural}) fall short when examined in a variety of attack settings, including transfer learning~\cite{yao2019latent}, federated
learning~\cite{pmlr-v97-bhagoji19a,bagdasaryan2020backdoor, wang2020attack, xie2019dba} and
physical attack scenarios~\cite{lin2020composite, wenger2021backdoor}.

Existing works on backdoor attacks and defenses share a common assumption: the model, once trained, will never change. As such, an injected backdoor will stay in the model permanently. In reality, however, deployed models rarely if ever remain static, but continuously evolve to incorporate more labeled data or adapt to drifts in the underlying data distribution~\cite{kumar2020understanding,
  nahar2022collaboration,retrain1}. As the target distribution shifts, a static model will continue to drop in performance over time. 
%A major shortcoming of existing work on backdoor attacks is a common
%assumption that models are {\em static} and do not change over time.
%Current research on backdoor attacks focuses on the injection process, 
%either making the trigger more subtle~\cite{lin2020composite, wenger2021backdoor, nguyen2021wanet}
%or addressing different threat %models~\cite{yao2019latent, tang2020embarrassingly}.
%It is often assumed that once a backdoor is injected, %it will stay in the 
%model permanently. 

%However, in reality, most production ML models are continuously updated, either to incorporate more labeled data, or to adapt to evolving changes in the
%targeted data distribution~\cite{kumar2020understanding,
 % nahar2022collaboration}.  For example, ML models in recommendation systems,
%user behavior classification, even road recognition %in self-driving vehicles,
%all need to deal with underlying target distributions %that shift over time~\cite{retrain1}.
%If a deployed model's target data distribution evolves, but the model does
%not, the mismatch will produce significant drops in %model normal accuracy. 

In this work, we study the permanence of backdoor attacks on {\bf \tvmodels}, models that are periodically updated with new training data. 
%The above observation motivates us to study the state  of injected backdoors as the model evolves. 
Specifically, we consider \tvmodels{} updated regularly via fine-tuning\footnote{While there are multiple ways of updating \tvmodels{}, including transfer learning/fine-tuning, online (incremental) learning and domain adaptation~\cite{hoi2021online,
hoi2014libol, read2012batch, sahoo2018online, yoon2018lifelong}, we choose to focus on fine-tuning due to its generality and wide adoption. We leave the study on other model updating methods to future work.}, the most widely used method for addressing data distribution drifts~\cite{yosinski2014transferable, weiss2016survey}. 
%There are multiple ways of updating \tvmodels{}, including transfer learning, 
%online (incremental) learning and domain %adaptation~\cite{hoi2021online,
%hoi2014libol, read2012batch, sahoo2018online, %yoon2018lifelong}. 
%In this paper, we focus on \tvmodels{} updated via %fine-tuning (transfer learning), 
%given it is the most basic, stable and standard way %to address data distribution drifts of models. 
%Fine-tuning was first proposed in transfer learning %to address
%data distribution %drift~\cite{yosinski2014transferable, %weiss2016survey}. 
We note that existing studies on backdoors have shown that fine-tuning is ineffective as a defense against backdoors, i.e., it fails to remove backdoors from a static model~\cite{finepruning}.  In contrast, our work asks a different question: 

''{\em For \tvmodels{} that are periodically updated via fine-tuning, how long can backdoors survive before they are removed or become ineffective?}''

Given the lack of robust defenses against backdoor attacks, it is important to first understand the temporal behavior of backdoors as the models evolve themselves. 

%Existing work has shown that fine-tuning is ineffective as a defense against backdoor attacks but can degrade 
%the backdoor attack success rate to some %degree~\cite{finepruning}.

%Therefore, the question this paper aims to answer is: %{\em for \tvmodels{} that are periodically updated %via fine-tuning, 
%how long can backdoors survive before they are %removed or become ineffective?}

%Different from previous work that simply claims fine-tuning can or cannot remove backdoors as a defense mechanism, our goal
%is to give a comprehensive study to understand the backdoor survivability on \tvmodels{} and what factors may 
%affect it. 

%{\bf Heather Stopped Here; Switching to another paper; Ben will continue}

\para{Our Contributions.} 
In this paper, we report results from a comprehensive study on the permanence of backdoors in \tvmodels{} periodically updated by fine-tuning.  Our work makes four key contributions: 

(1) We present the first theoretical analysis of backdoor 
and model behavior during periodic fine-tuning. Our analysis shows that with sufficient training, fine-tuning can remove backdoors inside corrupted models, while more training updates and larger learning rates can accelerate the forgetting process. We also show, for the first time, that the number of iterations required to fine-tune the model to recover benign behavior and proportion of poisoned training data, thus theoretically linking the model convergence and poisoning ratios.

%In addition, we prove that larger poison %ratios slow down the convergence to an %optimal model with best normal %performance.
%\fixme{Arjun: can you check this after %the theory part is done?}

(2) We take an empirical approach to study the behavior of backdoors in \tvmodels{} with complex and realistic training dynamics. We quantify   how embedded backdoors gradually degrade in \tvmodels{}  until they are ``forgotten'' and become ineffective. 
%their triggers lose efficacy.
We define a new metric called {\bf backdoor survivability} on \tvmodels{}, and
explore the impact of poison ratio, trigger evasiveness and benign data shifts
on backdoor survivability.

(3) Leveraging insights from our theoretical analysis, we show that well-chosen training strategies can significantly reduce the
backdoor survivability with negligible overhead. 

(4) We discuss the
need of new backdoor defenses under the time-varying setting, because most existing defense
mechanisms cannot be directly adapted to the time-varying setting.

\vspace{-0.05in}
\section{Background and Related Works}\label{sec:back}

In this section, we provide the necessary background on backdoor attacks and time-varying DNN models.

\vspace{-0.05in}
\subsection{Backdoor Attacks}\label{subsec: backdoor_attacks}
\vspace{-0.05in}

Backdoors are the most common and effective form of data poisoning attacks,
and also the most concerning type of attacks against machine learning systems~\cite{kumar2020adversarial}.
The initial backdoor attacks assume the attacker only has
access to the training data~\cite{gu2017badnets, chen2017targeted}. Later work
proposes a rich set of attack variants which consider stronger attackers with white-box access to the models and training process~\cite{liu2018trojaning, yao2019latent, nguyen2020input, tang2020embarrassingly, liu2020reflection, pang2020tale}.
 Table~\ref{tab:attacks} in Appendix summarizes the SOTA attacks and their threat models. 
% Finally, we note that existing works on backdoor attacks have all taken a
% ``static'' view,  i.e. they target static models that remain unchanged once
% deployed.

%Besides highly effective, 
Backdoor attacks are also stealthy and hard to
detect. Unlike traditional poisoning attacks~\cite{biggio2012poisoning,
  xiao2012adversarial, steinhardt2017certified}, backdoors typically do not
affect normal model accuracy, and only cause misclassification on
inputs containing attack-specific triggers. While significant efforts were
spent on detecting and removing backdoors on DNN models
(e.g.,~\cite{gao2019strip, tran2018spectral, finepruning, wang2019neural, liu2019abs,
qiao2019defending}), they all face major limitations~\cite{gao2020backdoor, veldanda2020evaluating}.

\vspace{-0.05in}
\subsection{\TVModels}\label{subsec: tv_models}
\vspace{-0.05in}
Machine learning models are usually trained based on the assumption that the
distribution of training and test data is identical. In practice, this is often not true. Test data can come from distributions
that have drifted away from the training data distribution, and this can
significantly affect model accuracy~\cite{patel2015visual, wang2018deep}. 

It is common practice to
update a deployed model over time in order to handle distribution drifts~\cite{MLops, tajbakhsh2016convolutional,kumar2020understanding, nahar2022collaboration}.
Numerous methods have been proposed to address data distribution drifts, including transfer learning~\cite{torrey2010transfer, yosinski2014transferable, weiss2016survey, zhuang2020comprehensive}, online (incremental) learning~\cite{read2012batch, sahoo2018online, hoi2021online, chen2021active} and domain adaptation~\cite{ganin2015unsupervised, sankaranarayanan2018learning, liu2021cycle}.

\para{Poisoning attacks on \tvmodels.} 
Recent work has studied non-backdoor poisoning attacks on
\tvmodels, specifically online learning
\cite{wang2018data,wang2019investigation,pang2021accumulative}. These attacks aim at lowering normal model accuracy. 

%differ from backdoor attacks 
%in contrast to backdoor attacks, aim to %reduce normal model accuracy.
% \citet{wang2018data, pang2021accumulative} propose a number of strategies to craft poisoned updates to steadily reduce model accuracy during training.

\vspace{-0.05in}
\subsection{Model Fine-Tuning}\label{sebsec: fine-tuning}
\vspace{-0.05in}

\para{Fine-tuning in transfer learning.} Fine-tuning is widely used in the context of transfer learning~\cite{yosinski2014transferable},
where the model trainer starts from a pretrained (teacher) model and updates the model using fresh training data at a small learning rate. 
%initializes training with a pretrained %%(teacher) model rather than random% %initialization and updates the model %with a small learning rate. 
Fine-tuning is significantly faster than training-from-scratch, e.g., training AlexNet via fine-tuning takes an hour~\cite{finepruning} but $>6$ days when training-from-scratch~\cite{iandola2016firecaffe}.
%Thus, fine-tuning is the most basic and %standard way to address data %distribution drifts of models.
 
\para{Fine-tuning as a defense against backdoor attacks.} 
Previous studies have shown that fine-tuning is an ineffective and unstable defense against backdoor attacks~\cite{finepruning, pang2022trojanzoo}, although these studies only considered static models. 
%Although using only fine-tuning as a %defense strategy is inadequate, 

To the best of our knowledge, there is no prior study on the behavior/effectiveness of backdoors in \tvmodels, where models are periodically updated using fine-tuning and fresh data to handle data drifts.  This motivated our study.

\vspace{-0.05in}
\section{Definitions and Threat Model}
\label{sec:setup}
We introduce definitions of \tvmodel{} and backdoor attacks, as well as the threat model for our study. We focus on image-based classification tasks.

\vspace{-0.05in}
\subsection{Definitions}
\vspace{-0.05in}
\para{\Tvmodels.}
We refer to models that change over time as \tvmodels{} and define them below: 
%and 
%define them formally:

\begin{definition}[\Tvmodels]
	A \tvmodel{} $F$, observed between time 0 and $t$,  is a sequence of
        models $\{F_{\btheta_i}\}_{i=0}^{t}$ trained from sequentially available data
        $\{D_i\}_{i=0}^{t}$ such that after the $i^{th}$ update, the model
        $F_{\btheta_i}=\mathbf{g}(\{D_k\}_{k \leq i}, \{F_k\}_{k < i})$. Here
        $\mathbf{g}(\cdot)$ is any training function. The new training data
        available at the $i^{th}$ update ($D_i$) is drawn from a distribution
        $\mathcal{D}_i$.
\vspace{-0.05in}
\end{definition}

We note that model updates can come from different learning paradigms such
as transfer learning, online (incremental) learning, and domain adaptation. 
% Transfer learning is an supervised method, which fine-tunes model with a set 
% of labelled training data ($D_i = (X^i, y^i)$) collected on 
% new data distribution. Online learning operates on a per sample basis
% ($|D_i|=1,\forall i$) while batch incremental learning assumes
% $|D_i|=c,\forall i$, where $c$ is the batch size. Domain adaptation
% usually focuses on unsupervised and self-supervised learning.
In this paper, we focus on model fine-tuning (transfer learning), 
as it is the most basic and common way to update \tvmodels.

% In this paper, we assume the model owner monitors the model accuracy
% and triggers a model update upon detecting a given level of
% distribution drift (also suggested by~\cite{MLops}).  This creates an ``even'' distribution drift between consecutive model updates \cite{kumar2020understanding}. 

\para{Backdoor attacks.}
We define a backdoor attack by its trigger with a patch function $\patch(\cdot)$ on
the inputs, a target label $y_t$, and a poison ratio $(\alpha)$
where $\alpha$ defines the fraction of training data modified by the attacker.
By injecting a collection of poisoned training data $\{(\patch(x), y_t) | (x,y) \sim D, y\ne y_t\}$,
the attacker is able to inject trigger-specific classification rules into the trained model by minimizing the poison loss:
\begin{align}%
\begin{split}
\vspace{-0.1in}
	L_p (\btheta)&=	\mathbb{E}_{(x,y) \sim D| y\ne y_t}[\ell_p((x,y),F_{\btheta})] \\
	& = \mathbb{E}_{(x,y) \sim D| y\ne y_t}[\ell(y_t, F_\btheta(\patch(x)))]
 \vspace{-0.05in}
\end{split}
\end{align}

 % In this paper, we focus on the standard threat model, where
 % the attacker only has access to the training data, same as the initial backdoor attacks~\cite{gu2017badnets}.

\vspace{-0.05in}
\subsection{Threat Model}\label{subsec:threat}
\vspace{-0.05in}

We use the standard threat model of backdoor attacks
where the attacker is only able to control or modify a fraction of the model training data,
and has no knowledge of the model architecture, weights, or training
hyperparameters and no control of the training process during initial model training and subsequent model updates. 
% The target model is periodically fine-tuned with $D_i$. 

We focus on {\bf one-shot poisoning} where the attacker is only able to poison
the training data once. Without loss of generality, we consider
two cases: (i) the attacker poisons $D_0$ and injects a backdoor
into $F_{\btheta_0}$,  the initial model prior to deployment;  and (ii) the
attacker poisons the data used by a model update $D_i$ and thus
injects a backdoor into $F_{\btheta_i}$ (\htedit{via fine-tuning}). Without loss
of generality, we consider $i=1$.

We note that more powerful attackers can always continuously poison the training data used by subsequent model updates ($\{D_i\}_{i>0}$). Such persistent poisoning increases backdoor survivability (detailed results in Appendix~\ref{subsec:persist}). 

%We find that poisoning  more model %updates increases the backdoor %survivability  and present detailed %results in Appendix~\ref{subsec:persist}.

 % We define the attacks where the attacker keeps poisoning the
 %  data for all updates ($\{D_i\}_{i>0}$) as persistent poisoning. The goal for such attack
 %  is not only successfully injecting the backdoor in all model updates $\{F_i\}_{i>0}$, but
 %  also reducing the poisoning cost, which is the total number of
 %  poisoning samples.

% \arjun{We don't need this paragraph here, this is for the intro.}
% We seek to understand whether and how model updates
% affect backdoor effectiveness, if well-designed model updates
%  can weaken backdoor effectiveness and whether such natural model variations
% nullify existing SOTA backdoor defenses.

% can be leveraged to build stronger defenses against backdoor attacks.

% \yuxin{Another possibility is to squeeze the example in problem setup as an analytical motivation of the empirical study}
\vspace{-0.05in}
\section{%One-shot Poisoning of Binary Classifiers
%Analytical Behavior of
Theoretical Analysis of Backdoor Attacks during Fine-tuning}
\label{sec:theory}
%Logistic Regression under Covariate Shift
In this section, we first state a general proposition on the need of model fine-tuning in the presence of distribution drifts. We then show that for strongly convex loss functions, backdoors can be removed through sufficient fine-tuning with data from the original distribution. Larger learning rates can accelerate this process. However, the number of iterations required is influenced by the proportion of poisoned training data.

%\paragraph{Need for model fine-tuning} 
It is a well-established result that when the test distribution does not match the training distribution, the model has to be fine-tuned or re-trained for better generalization performance. Formally, the need for model fine-tuning is reflected in the following proposition \cite{ben2010theory,kumar2020understanding}.

\begin{restatable}[Need for fine-tuning with distribution drift]{proposition}{finetuning}
	Let $\btheta^*_t = \argmin_\btheta  L_{D_t}(\btheta)$ and $\btheta^*_{t-1} = \argmin_\btheta L_{D_{t-1}}$.  Then, %\footnote{The equality can hold for specific model classes (e.g. binary linear classifiers) and data distribution drift. Detailed examples are provided in the Appendix.}, 
 $L_{D_t}(\btheta^*_{t-1}) \geq L_{D_t}(\btheta^*_{t})$. 
	\label{thm:finetuning_general}
\end{restatable}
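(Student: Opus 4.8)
The plan is to observe that the inequality follows immediately from the variational characterization of $\btheta^*_t$ as a global minimizer of the new-distribution loss. By definition, $\btheta^*_t = \argmin_\btheta L_{D_t}(\btheta)$, so $L_{D_t}(\btheta^*_t) \le L_{D_t}(\btheta)$ for every admissible parameter vector $\btheta$. Instantiating this with the particular choice $\btheta = \btheta^*_{t-1}$ yields $L_{D_t}(\btheta^*_{t-1}) \ge L_{D_t}(\btheta^*_t)$, which is exactly the claim. No convexity, smoothness, or structural assumption on the loss is invoked; the statement holds for any objective that attains its minimum, and the fact that $\btheta^*_{t-1}$ happens to be the minimizer of a \emph{different} loss $L_{D_{t-1}}$ plays no role in the derivation beyond supplying a concrete point to plug in.

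The only point requiring a moment of care is the existence of the minimizers named by $\argmin$. If one prefers not to assume the minima are attained, I would restate the result with infima: for every $\epsilon > 0$ choose a near-minimizer $\btheta_\epsilon$ with $L_{D_{t-1}}(\btheta_\epsilon) \le \inf_\btheta L_{D_{t-1}}(\btheta) + \epsilon$, and the same one-line argument gives $L_{D_t}(\btheta_\epsilon) \ge \inf_\btheta L_{D_t}(\btheta)$. In the intended setting the losses are empirical risks over finite data, and later in this section they are taken to be strongly convex, so the minima are in fact attained and the direct argument suffices.

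I expect no genuine obstacle in the derivation itself; the substance of the proposition is interpretive rather than technical. The inequality merely formalizes that parameters optimized for the previous distribution $\mathcal{D}_{t-1}$ cannot outperform, on the new distribution $\mathcal{D}_t$, the parameters optimized directly for $\mathcal{D}_t$. What actually motivates fine-tuning is that this gap is typically \emph{strict} whenever $\mathcal{D}_t \ne \mathcal{D}_{t-1}$: a model frozen at $\btheta^*_{t-1}$ incurs strictly higher loss than a freshly optimized model, and this excess grows with the degree of drift. Making that quantitative—bounding $L_{D_t}(\btheta^*_{t-1}) - L_{D_t}(\btheta^*_t)$ in terms of a divergence between $\mathcal{D}_{t-1}$ and $\mathcal{D}_t$—would require the discrepancy-based generalization analysis of the cited works, but such a strengthening is beyond what the stated inequality asks for.
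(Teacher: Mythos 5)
Your proof is correct: the inequality is an immediate consequence of $\btheta^*_t$ being a global minimizer of $L_{D_t}$, evaluated at the particular point $\btheta^*_{t-1}$, and this is exactly the intended reasoning. The paper itself gives no proof of this proposition (it is stated as a well-established fact with citations, and its appendix only proves the lemma and theorem that follow), so your one-line variational argument matches the paper's implicit treatment; your remarks on attainment of minima and on strictness under genuine drift are sensible additions but not required by the statement.
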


In the following, we analyze the impact of model fine-tuning on backdoor attacks, under the special setting of one-shot poisoning with \textit{strongly convex} loss functions.

\paragraph{Theoretical setting.} We assume the classifiers $\btheta$ lie in a convex set $\Theta$ are norm-bounded by $B$. Both the loss functions $\ell$ and $\ell_p$ are strongly convex and $\gamma$-smooth with parameters $(\sigma_b,\gamma_b)$ and $(\sigma_p,\gamma_p)$ respectively. A function $f$ is $\sigma-$strongly convex if $\left(  \nabla f(x) -\nabla f(y)  \right)^\intercal (x-y) \geq \sigma \| x-y \|^2 $. A differentiable function is $\gamma$-smooth if $\| \nabla f(x) - \nabla f(y) \| \leq L \|x-y\|$. We define $L_D(\btheta) = \mathbb{E}_{(x,y) \sim D}[\ell((x,y),F_{\btheta})]$. The attacker trains a poisoned classifier $\btheta_{\text{mix}}$ using a composite loss function $L_{\text{mix}}($\btheta$) = \alpha L_p($\btheta$) + (1-\alpha) L_{D_0}($\btheta$)$, where $\alpha$ represents the fraction of poisoned data. Proofs for the results in this section are deferred to Appendix~\ref{sec:app_theory}.

We first provide an upper bound on the distance between the poisoned model $\btheta_{\text{mix}}$ and the one trained on the benign distribution, i.e., $\btheta_0^* = \argmin_\btheta L_{D_0}($\btheta$)$.
%The proofs for all the theorems and corollaries below are in Appendix~\ref{sec:app_theory}.

%where
%and $\Pr[h_p(\text{patch}(\instance)) = 1] > \Pr[h^*(\text{patch}(\instance)) = 1]$.

%In binary classification, we assume a backdoor attack will inject a third distribution
%associated with the target class. Without loss of generality, we assume the target class
%is class 1.

%\begin{definition}[Patch function.]
%  The attacker defines a patch function on input of class 0 ($\Instance \sim \N(\vec{0},\,\Sigma)$) and output
%  in $\Instance$. We assume the patch function does not affect the class of the input on the oracle classifier (ground truth) $O(\instance)$
%  where $\Pr[O(\instance) = O(\text{patch}(\instance))]=1$. To inject the backdoor $(\text{patch}, y_t=1)$,
%  the attacker poisons the training data by add a proportion $\alpha$ of patched inputs $\text{patch}(\instance)$
%  and labels all of the patched inputs with target class $y_t=1$.
%\end{definition}

% YC: we can't claim distance to be farther based on the upper bound on the distance
\begin{restatable}%[Stronger poisoning leads to farther mixed minima]
{lemma}{poisonslowdown}
	%$\| \btheta_{\text{mix}} - \btheta_0^* \|$ is bounded from above by $\frac{\alpha \nabla L_p(\btheta_0^*) \hat{\bm{u}}}{\alpha \gamma_p + (1-\alpha) \gamma_b}$ which is a monotonically increasing function of $\alpha$.
 %$\| \btheta_{\text{mix}} - \btheta_0^* \|\leq \frac{\alpha \nabla L_p(\btheta_0^*) \hat{\bm{u}}}{\alpha \gamma_p + (1-\alpha) \gamma_b}$. 
 $\| \btheta_{\text{mix}} - \btheta_0^* \|\leq \frac{\alpha \lVert \nabla L_p(\btheta_0^*) \rVert}{\alpha \sigma_p + (1-\alpha) \sigma_b}$. 
 % \fixme{Should the $\gamma$'s in the denominator be $\sigma_b$, $\sigma_p$?}
\label{lemma:poison_slowdown}
\end{restatable}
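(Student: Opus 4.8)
The plan is to exploit the two first-order optimality conditions together with the strong convexity of the composite objective. Since $\btheta_{\text{mix}}$ minimizes $L_{\text{mix}}(\btheta) = \alpha L_p(\btheta) + (1-\alpha) L_{D_0}(\btheta)$ over the convex set $\Theta$ and $\btheta_0^*$ minimizes $L_{D_0}$, I would first record the stationarity conditions $\nabla L_{\text{mix}}(\btheta_{\text{mix}}) = \mathbf{0}$ and $\nabla L_{D_0}(\btheta_0^*) = \mathbf{0}$. The crucial observation is that evaluating the gradient of the \emph{mixed} loss at the \emph{benign} optimum leaves only the poison term surviving: $\nabla L_{\text{mix}}(\btheta_0^*) = \alpha \nabla L_p(\btheta_0^*) + (1-\alpha)\nabla L_{D_0}(\btheta_0^*) = \alpha \nabla L_p(\btheta_0^*)$, since the benign gradient vanishes at $\btheta_0^*$. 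This is what ultimately produces the factor $\alpha \lVert \nabla L_p(\btheta_0^*)\rVert$ in the numerator.

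Next I would use the fact that a nonnegative combination of strongly convex functions is strongly convex: $L_{\text{mix}}$ is $\sigma$-strongly convex with $\sigma = \alpha \sigma_p + (1-\alpha)\sigma_b$, which will become the denominator. Applying the strong-convexity inequality stated in the theoretical setting to the pair $(\btheta_{\text{mix}}, \btheta_0^*)$ gives
\begin{align*}
\big(\nabla L_{\text{mix}}(\btheta_{\text{mix}}) - \nabla L_{\text{mix}}(\btheta_0^*)\big)^\intercal (\btheta_{\text{mix}} - \btheta_0^*) \geq \sigma \lVert \btheta_{\text{mix}} - \btheta_0^* \rVert^2 .
\end{align*}
Substituting $\nabla L_{\text{mix}}(\btheta_{\text{mix}}) = \mathbf{0}$ and the identity above, the left-hand side collapses to $-\alpha\, \nabla L_p(\btheta_0^*)^\intercal (\btheta_{\text{mix}} - \btheta_0^*)$.

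To finish, I would bound this inner product by Cauchy--Schwarz, $-\alpha\, \nabla L_p(\btheta_0^*)^\intercal (\btheta_{\text{mix}} - \btheta_0^*) \leq \alpha \lVert \nabla L_p(\btheta_0^*)\rVert \,\lVert \btheta_{\text{mix}} - \btheta_0^* \rVert$, so that
\begin{align*}
\sigma \lVert \btheta_{\text{mix}} - \btheta_0^* \rVert^2 \leq \alpha \lVert \nabla L_p(\btheta_0^*)\rVert \,\lVert \btheta_{\text{mix}} - \btheta_0^* \rVert .
\end{align*}
Dividing through by $\sigma \lVert \btheta_{\text{mix}} - \btheta_0^* \rVert$ (the case $\btheta_{\text{mix}} = \btheta_0^*$ being trivial) yields exactly the claimed bound. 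I do not expect a genuine obstacle here: the argument is a short variational comparison, and the only subtlety is making sure the optimality conditions are the unconstrained ones. If $\btheta_0^*$ sits on the boundary of $\Theta$, the equality $\nabla L_{D_0}(\btheta_0^*) = \mathbf{0}$ must be replaced by a variational inequality; I would therefore either assume the minimizers are interior (consistent with the norm bound $B$ being non-binding) or restate the stationarity conditions as first-order variational inequalities over $\Theta$, which suffices to reproduce the same chain of inequalities.
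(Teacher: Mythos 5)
Your proof is correct and follows essentially the same route as the paper's: strong convexity of $L_{\text{mix}}$ with parameter $\alpha\sigma_p + (1-\alpha)\sigma_b$, the vanishing-gradient identities $\nabla L_{\text{mix}}(\btheta_{\text{mix}}) = \mathbf{0}$ and $\nabla L_{D_0}(\btheta_0^*) = \mathbf{0}$ (so that $\nabla L_{\text{mix}}(\btheta_0^*) = \alpha \nabla L_p(\btheta_0^*)$), and Cauchy--Schwarz to pass from the inner product to the norm bound. Your closing remark about boundary minimizers and the trivial case $\btheta_{\text{mix}} = \btheta_0^*$ is in fact slightly more careful than the paper's own writeup, which divides by $\| \btheta_0^* - \btheta_{\text{mix}} \|$ and invokes unconstrained stationarity without comment.
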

Note that the upper bound in Lemma~\ref{lemma:poison_slowdown} is a monotonically increasing function of $\alpha$. The following theorem establishes the convergence rate of stochastic gradient descent for model fine-tuning, assuming that the model is initialized with the poisoned model $\btheta_{\text{mix}} = \argmin_{\btheta} L_{\text{mix}}(\btheta)$. 
\begin{restatable}[Effectiveness of model fine-tuning for backdoor removal]{theorem}{backdoorremoval}
%Let $\btheta_{\text{mix}} = \argmin_{\btheta} L_{\text{mix}}(\btheta)$ be the pre-trained model with the $L_{\text{mix}}$. 
Fine-tuning $\btheta_{\text{mix}}$ with stochastic gradient descent with a learning rate of $\eta=\frac{1}{\sigma_b t}$ on %the benign distribution 
$D_0$ leads to a classifier $\hat{\btheta}$ which is $\epsilon$-close to $\btheta_0^* = \argmin L_{D_0}(\btheta)$ in $ \frac{\alpha \gamma_p}{\epsilon (\alpha \sigma_p + (1-\alpha) \sigma_b)}$ iterations.
\label{thm:backdoorremoval}
\end{restatable}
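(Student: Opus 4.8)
The plan is to treat this as a standard stochastic-gradient-descent convergence problem for a strongly convex objective, where the only nonstandard ingredient is the initialization: we run SGD on the clean objective $L_{D_0}$, but we start from the poisoned minimizer $\btheta_{\text{mix}}$ rather than from an arbitrary point. Since $L_{D_0}$ is $\sigma_b$-strongly convex and $\gamma_b$-smooth, and the step size is the canonical $\eta_t = 1/(\sigma_b t)$, I expect a $1/T$ convergence rate on the squared distance $\mathbb{E}\|\hat{\btheta} - \btheta_0^*\|^2$, with the leading constant governed by the initialization error $\|\btheta_{\text{mix}} - \btheta_0^*\|$. The poison ratio $\alpha$ then enters the iteration count entirely through this initialization error, which Lemma~\ref{lemma:poison_slowdown} already controls.

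Concretely, I would proceed in four steps. First, write the one-step contraction inequality for projected SGD: expanding $\|\btheta_{t+1} - \btheta_0^*\|^2$, taking conditional expectation, and using that $\nabla L_{D_0}(\btheta_0^*) = 0$ together with $\sigma_b$-strong convexity gives $\mathbb{E}\|\btheta_{t+1}-\btheta_0^*\|^2 \le (1 - 2\eta_t\sigma_b)\,\mathbb{E}\|\btheta_t-\btheta_0^*\|^2 + \eta_t^2 G^2$, where $G$ bounds the stochastic gradient norm over $\Theta$. Second, substitute $\eta_t = 1/(\sigma_b t)$ and unroll the recursion; the homogeneous part shrinks the initial distance while the inhomogeneous part accumulates to the familiar $O(1/(\sigma_b^2 T))$ term, leaving a bound of the form $\mathbb{E}\|\hat{\btheta} - \btheta_0^*\|^2 \lesssim \frac{c_1\|\btheta_{\text{mix}}-\btheta_0^*\|^2 + c_2}{T}$. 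Third, feed in Lemma~\ref{lemma:poison_slowdown}, i.e.\ $\|\btheta_{\text{mix}}-\btheta_0^*\| \le \frac{\alpha\|\nabla L_p(\btheta_0^*)\|}{\alpha\sigma_p + (1-\alpha)\sigma_b}$, and convert the residual poison gradient into a smoothness factor: since $\nabla L_p$ vanishes at its own minimizer and $\Theta$ is norm-bounded by $B$, $\gamma_p$-smoothness yields $\|\nabla L_p(\btheta_0^*)\| \le \gamma_p \cdot O(B)$, which injects the $\gamma_p$ appearing in the claimed bound. Fourth, set the resulting accuracy bound equal to $\epsilon$ and solve for $T$, which should collapse to $T = \frac{\alpha\gamma_p}{\epsilon(\alpha\sigma_p + (1-\alpha)\sigma_b)}$ up to the constants absorbed into the definition of ``$\epsilon$-close.''

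The hard part will be making the closed form come out cleanly rather than as an order bound. In particular, I expect the delicate point to be arguing that the iteration count is genuinely driven by the initialization term $\|\btheta_{\text{mix}}-\btheta_0^*\|$ and not swamped by the noise floor $G^2/\sigma_b^2$: with the $1/(\sigma_b t)$ schedule the initial-distance contribution decays faster than the accumulated noise term, so one must either assume the variance is controlled (e.g.\ by the same smoothness constants near the optimum) or measure ``$\epsilon$-close'' in the regime where the initialization dominates. A second bookkeeping obstacle is that two parameter pairs $(\sigma_b,\gamma_b)$ and $(\sigma_p,\gamma_p)$ are in play, and I would need to verify that $\gamma_b$ drops out so that only $\gamma_p,\sigma_p,\sigma_b$ survive, which is exactly what makes the bound interpretable. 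Once the form is established, monotonicity of $\frac{\alpha}{\alpha\sigma_p + (1-\alpha)\sigma_b}$ in $\alpha$ delivers the paper's headline reading: a larger poison ratio forces proportionally more fine-tuning iterations to erase the backdoor.
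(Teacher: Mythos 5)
Your proposal follows essentially the same route as the paper's proof: the paper invokes Lemma~1 of Rakhlin et al.\ (the very recursion you derive from the one-step contraction), initializes at $\btheta_{\text{mix}}$, controls the initialization distance via Lemma~\ref{lemma:poison_slowdown}, converts $\lVert \nabla L_p(\btheta_0^*) \rVert$ into $\gamma_p$ via smoothness, and handles your ``noise floor'' concern exactly as you anticipate, by assuming $\| \btheta_{\text{mix}} - \btheta_0^* \| > \gamma_b^2/\sigma_b^2$ so the initialization term dominates. The only differences are bookkeeping: the paper states the recursion for unsquared distances (so the iteration count is linear in the initialization distance over $\epsilon$) and silently absorbs the $O(B)$ factor you retain in $\lVert \nabla L_p(\btheta_0^*)\rVert \leq \gamma_p \cdot O(B)$.
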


According to Theorem~\ref{thm:backdoorremoval}, fine-tuning with clean data will
remove the backdoor from the poisoned model with sufficient fine-tuning. %The greater the number of iterations, the closer to the model is to the benign minimum, which has no backdoor behavior. 
Intuitively, the model recovers more from the backdoor attack (i.e. gets closer to the benign minimum) with more iterations. However, the stronger the poisoning is, the greater the number of iterations that will be needed to converge to the benign minimum.

\begin{corollary}[Large initial learning rates speed up backdoor removal]
  Using an adaptive learning rate of $\eta_t=\frac{1}{\sigma_b t}$ leads to the fastest rate of convergence to $\btheta^*_0$.
    \label{coro:lr}
\end{corollary}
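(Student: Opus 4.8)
The plan is to reduce the corollary to the classical convergence analysis of stochastic gradient descent on the $\sigma_b$-strongly convex loss $L_{D_0}$, and then show that the schedule $\eta_t = 1/(\sigma_b t)$ is precisely the one that minimizes the resulting bound. First I would set $a_t = \mathbb{E}\|\btheta_t - \btheta_0^*\|^2$ for the iterates $\btheta_{t+1} = \btheta_t - \eta_t g_t$, where $g_t$ is an unbiased stochastic gradient, $\mathbb{E}[g_t\mid\btheta_t] = \nabla L_{D_0}(\btheta_t)$, with bounded second moment $\mathbb{E}\|g_t\|^2 \le G^2$ (such a $G$ exists on the bounded set $\Theta$ by $\gamma_b$-smoothness). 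Expanding the update and taking expectations gives
\[
a_{t+1} = a_t - 2\eta_t\,\mathbb{E}\langle \nabla L_{D_0}(\btheta_t), \btheta_t - \btheta_0^*\rangle + \eta_t^2\,\mathbb{E}\|g_t\|^2.
\]
Using $\nabla L_{D_0}(\btheta_0^*) = 0$ together with $\sigma_b$-strong convexity yields $\langle \nabla L_{D_0}(\btheta_t), \btheta_t - \btheta_0^*\rangle \ge \sigma_b\|\btheta_t - \btheta_0^*\|^2$, so
\[
a_{t+1} \le (1 - 2\sigma_b\eta_t)\,a_t + G^2\eta_t^2.
\]

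Next I would treat the right-hand side as a convex quadratic in the free parameter $\eta_t$. Its minimizer is $\eta_t^\star = \sigma_b a_t/G^2$, the greedily optimal step that makes the one-step bound as small as possible; since the bound is monotone increasing in $a_t$ for these (small) step sizes, greedy minimization is also globally optimal for the bound trajectory. Substituting $\eta_t^\star$ gives $a_{t+1} \le a_t\bigl(1 - \sigma_b^2 a_t/G^2\bigr)$, and a standard induction shows $a_t \le Q/t$ with $Q = G^2/\sigma_b^2$, i.e. the optimal $O(1/t)$ rate $a_t \le G^2/(\sigma_b^2 t)$. Plugging this optimal trajectory back into $\eta_t^\star = \sigma_b a_t/G^2$ shows the self-consistent optimal schedule collapses to $\eta_t^\star = 1/(\sigma_b t)$, which is exactly the claimed adaptive rate. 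To confirm the induction constant directly for the fixed schedule $\eta_t = 1/(\sigma_b t)$, I would verify that $(1-2/t)(Q/t) + G^2/(\sigma_b^2 t^2) \le Q/(t+1)$ reduces to $Q \ge G^2/\sigma_b^2$, using $Q/t - Q/(t+1) = Q/(t(t+1)) \le Q/t^2$.

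Finally, to justify the word \emph{fastest}, I would invoke the known minimax lower bound $\Omega(1/t)$ for stochastic optimization of $\sigma_b$-strongly convex objectives: since the $1/(\sigma_b t)$ schedule already attains $O(1/t)$, no step-size schedule can improve the asymptotic rate, and this schedule does so with the smallest achievable constant $G^2/\sigma_b^2$. The main obstacle is making \emph{fastest} rigorous rather than heuristic: minimizing the per-step \emph{upper bound} over $\eta_t$ establishes optimality only for the bound, not for the true iterate, so the argument must be coupled with the matching lower bound (or with an explicit comparison showing that any $\eta_t = c/t$ with $c \ne 1/\sigma_b$ incurs either a strictly larger multiplicative constant or a slower polynomial decay). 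A secondary subtlety is that the greedy-optimality step requires $\eta_t \le 1/(2\sigma_b)$ to keep the bound monotone in $a_t$, which holds for all sufficiently large $t$ and hence governs the asymptotic rate; the base case and the boundary term $Q/(t(t+1))$ must be checked to pin down the constant.
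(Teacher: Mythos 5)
Your proposal is correct, but it takes a genuinely different route from the paper --- indeed, the paper contains no standalone proof of this corollary at all: the statement is asserted as a by-product of Theorem~\ref{thm:backdoorremoval}, whose proof simply invokes Lemma~1 of \citet{rakhlin2011making}, a result stated there precisely for the schedule $\eta_t = 1/(\sigma_b t)$, together with the known fact that the resulting $O(1/t)$ rate is unimprovable for strongly convex stochastic optimization. You instead rebuild the argument from first principles: the one-step recursion $a_{t+1} \le (1-2\sigma_b \eta_t)a_t + G^2\eta_t^2$, greedy minimization over $\eta_t$ giving $\eta_t^\star = \sigma_b a_t/G^2$, the induction $a_t \le G^2/(\sigma_b^2 t)$, and the self-consistency observation that along this trajectory the greedy schedule collapses to exactly $1/(\sigma_b t)$; you then invoke the minimax $\Omega(1/t)$ lower bound to give meaning to \emph{fastest}. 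Your algebra checks out (in particular the induction step reducing to $Q \ge G^2/\sigma_b^2$ via $Q/(t(t+1)) \le Q/t^2$). What each approach buys: the paper's citation-based route is short, but it never substantiates the word \emph{fastest} --- Theorem~\ref{thm:backdoorremoval} only shows that this particular schedule attains $O(1/T)$, not that no other schedule does better; your route is self-contained and is the only one of the two that actually engages with optimality, and you correctly identify the key subtlety that minimizing a per-step \emph{upper bound} proves optimality of the bound rather than of the iterates, so the matching lower bound (or an explicit comparison against $c/t$ schedules with $c \ne 1/\sigma_b$, which lose either in the constant or in the polynomial exponent) is genuinely needed. Two cosmetic mismatches with the paper, neither affecting correctness: you track squared distances $\mathbb{E}\left[\| \btheta_t - \btheta_0^* \|^2\right]$ where the paper's theorem (somewhat loosely) tracks unsquared ones, and your constant involves the stochastic-gradient second moment $G^2$ where the paper writes $\gamma_b^2/\sigma_b^2$.
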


% In linear logistic regression, the loss function is strictly convex. Thus, both the
% benign loss $\ell_{benign}(\weight, b)$ and adversarial loss $\ell_{adv}(\weight, b)$ are strictly convex in our case.
% According to Equation~\ref{eq:updates}, it is easy to infer Corollary~\ref{coro:trainingupdates} and~\ref{coro:lr}.
 Given limited training
resources in reality, we could use a smarter learning rate during the training to
speed up the backdoor removal process as shown in Corollary~\ref{coro:lr}.

%\section{Why are backdoor attacks different on \tvmodels{}?}
\section{Backdoor Survivability against Periodic Model Fine-Tuning}
\label{sec:basic}
\vspace{-0.05in}

Following our theoretical analysis in \S\ref{sec:theory},
we now conduct an empirical study examining the behavior of backdoors
in \tvmodels{} with more complex and realistic training dynamics.
We investigate backdoor attacks on state-of-the-art DNNs which are
updated through fine-tuning over time to handle data distribution shifts.
To the best of our knowledge, this is the first study to comprehensively
investigate the effects of periodically updating models through fine-tuning on backdoor attacks.

We formulate our empirical study to examine the ``survivability'' of a
successfully injected backdoor against subsequent model updates, and how
backdoor survivability is affected by different attack configurations, 
data drift behaviors, and model update strategies.  We believe these
results offer a more in-depth understanding of the behavior of
backdoor attacks against production models, in terms of both attack
overhead and damage to the models.
Later in \S\ref{sec:training} we leverage these insights to build model training/updating strategies that
further reduce backdoor survivability. %(\S\ref{sec:training}).

In the following, we first introduce the experimental setup (\S\ref{subsec: exp_setup}), and a formal definition of backdoor survivability and some initial observations (\S\ref{subsec:surv}).   We present the
detailed results in \S\ref{sec:attack} -- \S\ref{sec:shift}.

\vspace{-0.05in}
\subsection{Experimental Setup}\label{subsec: exp_setup}
\vspace{-0.05in}
We now describe the configuration of our empirical study to examine backdoor
survivability.

\para{Datasets.} To build a controlled pipeline for our dynamic
data environment, we propose two semi-synthetic datasets for image classification:
\mnist{}~\cite{lecun1998gradient} and \cifar{}~\cite{krizhevsky2009learning}.
For each task,  we randomly split its training data into two halves. The first half
is assigned as $D_0$ and used to train the initial model $F_0$.
The second half and the test data are used to emulate the dynamic
data environment, where both the training and test data distributions vary over
time.

In this work, we produce parameterized data dynamics by
applying  image transformations progressively (i.e.,
changing angle for \mnist{} and hue, brightness, saturation for \cifar{}\footnote{We do
not change angles for \cifar{} given rotating the images from \cifar{} will lose
information from the original images and we do not change hue, brightness, saturation for \mnist{}
since \mnist{} is mostly black and white and applying these transformation does little change on it.}),
similar to the method used by~\cite{kumar2020understanding} to study gradual domain
adaptation. We note that these semi-synthetic datasets
cannot capture the exact data distribution in the wild, but the drifts are controllable
and we can launch fine-grained empirical study on them.

In our experiments, the default data drift $p$ between any two
consecutive updates is caused by changing the angle of the current images by
a factor of $4^\circ$ on \mnist{} and the hue of the current images by
a factor of $0.02$ on \cifar{}.  These drifts, if not addressed, cause significant
degradation to model performance (e.g., the classification accuracy
drops from 99\% to {\bf 46\%} after 15 rounds of angle change for \mnist{} and
 92\% to {\bf 74\%} after 15 rounds of hue change for \cifar{} as shown in Figure~\ref{fig:acc_drop} in Appendix~\ref{subsec:acc_drop}).
 This is consistent with Proposition~\ref{thm:finetuning_general} in our theoretical analysis. More details on distribution drifts are listed in Appendix~\ref{sec:app_setup}.

\para{DNN models and updates.}  We present the results on
the ResNet-9~\cite{ilyas2022datamodels} model architecture.
We also verify our results on ResNet18~\cite{he2016deep} and
DenseNet121~\cite{huang2017densely} and find that they produce the
same trends on backdoor survivability.  For brevity, we only present the results
for ResNet-9, which is optimized for fast training.
For both \cifar{} and \mnist{}, we train the initial model $F_0$ to
reach a high normal classification accuracy: $92\%$ for \cifar{} and
$99\%$ for \mnist.

We consider fine-tuning as the model update strategy due to
its efficiency and practicality (as discussed in \S\ref{sebsec: fine-tuning}).  
At the $i^{th} (i>0)$ update, we fine-tune model $F_{i-1}$
using new training data $D_i$ to produce $F_i$, by applying
stochastic gradient descent (SGD) with weight decay and
momentum. We set the default learning rate to $0.01$. 
More details of model training and updating are listed in Appendix~\ref{sec:app_setup}.

\para{Attack configuration.}  We consider 3 attacks which have the same
threat model as ours: Badnets~\cite{gu2017badnets},
Blend~\cite{chen2017targeted}, and Wanet~\cite{nguyen2021wanet}. By default, the attacker is
able to poison 10\% of the training data $D_i$.
Figure~\ref{fig:example} gives an example of backdoor data for the 3 different attacks
on \cifar{}.  We conduct detailed experiments to
explore the impact of poison ratio and trigger stealthiness
in \S\ref{sec:attack}. For all our experiments (except the experiments 
varying poison ratios), we ensure that
the backdoor, when injected into the initial model ($F_0$), is
effective, i.e., its attack success rate is at least
$99\%$ without affecting the model's normal accuracy.

For each experiment, we first generate five different trained $F_0$ instances at random. Then for each
instance, we run 15 model updates with even data distribution drifts with step $p$
after the inital training.  We report the average results over the 5
instances.  We run all the experiments using the FFCV library~\cite{leclerc2022ffcv}
on an NVIDIA TITAN RTX GPU with 24,576MB GPU memory.

\begin{figure}[t]
  \centering
  \includegraphics[width=0.9\linewidth]{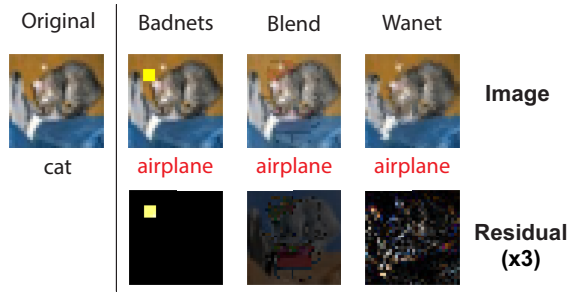}
  \caption{Examples of benign and poisoned training data and their magnified ($\times 3$) residual map
   with labels generated by the three attacks (Badnets~\cite{gu2017badnets}, Blend~\cite{chen2017targeted}, Wanet~\cite{nguyen2021wanet}) on \cifar.}
  \label{fig:example}\vspace{-0.1in}
\end{figure}

\begin{figure}[t]
\centering
  \subfigure[One-shot poison in $D_0$ (Model Update Event 0)]{
  \includegraphics[width=.95\linewidth]{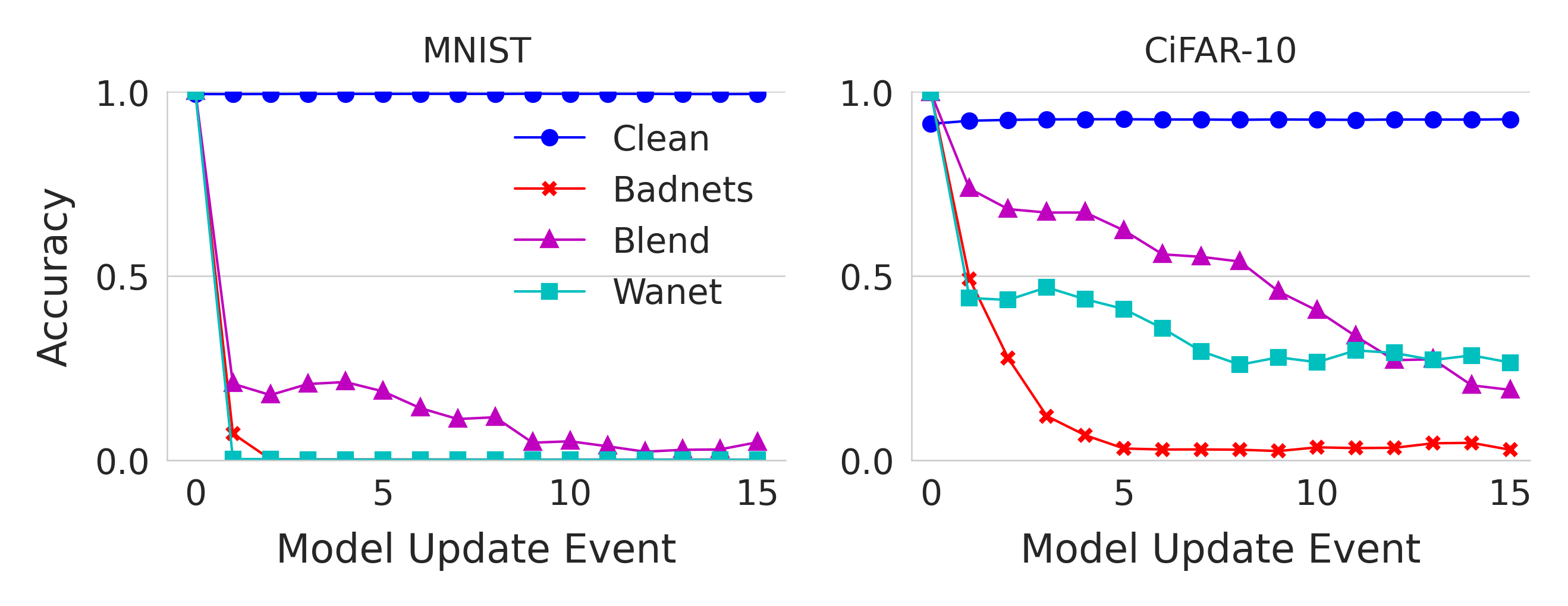}
  \label{fig:oneshot_D0}
  }
  \subfigure[One-shot poison in $D_1$ (Model Update Event 1)]{
  \includegraphics[width=.95\linewidth]{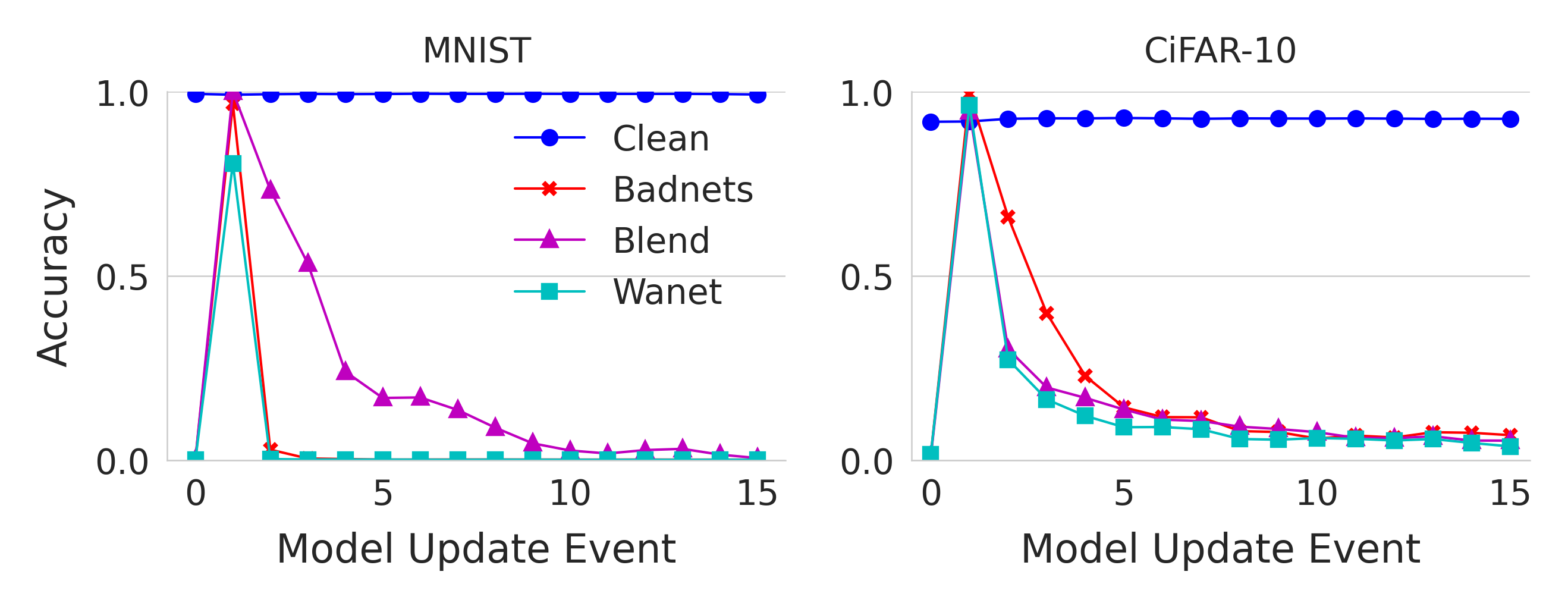}
  \label{fig:oneshot_D1}
  }\vspace{-0.05in}
\caption{Normal accuracy and attack success rate for one shot poisoning
using different attack methods on \mnist{} and \cifar{}.
`Clean' represents the average normal accuracy (averaged on 15 models,
5 for each attack method), `Badnets', `Blend' and `Wanet' represent the attack
success rate for each attack method.}
  \vspace{-0.1in}
\label{fig:oneshot}
\end{figure}

\vspace{-0.05in}
\subsection{Defining Backdoor Survivability}
\label{subsec:surv}
\vspace{-0.05in}
We define the survivability of a backdoor as the ``window of
vulnerability'' it produces on the target \tvmodels{} after the poisoning stops:

\begin{definition}[$\gamma$-survivability of a backdoor attack]
  The $\gamma$-survivability of a backdoor is the maximum number of
  subsequent model updates once the poisoning stops, during which the attack
  success rate on the target model remains above a  threshold $\gamma$.
\end{definition}\vspace{-0.06in}
Here $\gamma$  can vary depending on the definition of
``model vulnerability''. Researchers can choose different $\gamma$ according to
the context and application. In this paper, we choose $\gamma=0.5$.
We report the $0.5$-survivability for the 14 subsequent model updates after a backdoor is injected
for equal comparison between injection to $D_0$ and $D_1$ ($\max(\gamma-\text{survivability}) = 14$).

% Since the choice of $\gamma$ may vary based on context
% and application, we consider three values in our study:  0.25, 0.5
% and 0.75.  Since our experiments run up to 14 subsequent model updates after
% a backdoor is injected, if the attack
% success rate remains above $\gamma$ after these updates, its
% $\gamma$-survivability = $14$.

\para{Initial observations.}  For one-shot poisoning attacks, our
initial hypothesis is that the attack success rate should always decrease
over time, because as more benign samples are used to train the model,
the influence of the poisoned samples should reduce which aligns with our
theoretical results (Theorem~\ref{thm:backdoorremoval}). Figure~\ref{fig:oneshot}
confirms that for all three attacks on both datasets, the backdoor gradually
degrades when the model is fine-tuned to learn new (benign) data,
once the poison stops (both for poisoning $D_0$ and $D_1$).
At the same time, we can see that the normal accuracy stays at the same level during model fine-tuning with
data distribution drifts. Thus, we omit the normal accuracy and only report backdoor
survivability in the following sections for direct comparison over different attack
and training settings.

\vspace{-0.05in}
\subsection{Impact of Attack Configurations}\vspace{-0.05in}
\label{sec:attack}
We now present the results of $\gamma$-survivability under different
attack strategies and configurations.

As discussed in \S\ref{subsec:threat}, we consider two attack scenarios based
on how backdoors are injected into the target model. The attacker can (i)
poison $D_0$ to inject a backdoor to the original model $F_0$ via training
from scratch (or full training); or (ii) poison $D_1$ to inject a backdoor to
$F_1$ via fine-tuning.  While both can inject backdoors successfully\footnote{
As shown in Figure~\ref{fig:oneshot}, the attack success rate is over 98\% in
most cases except Wanet achieves 80.5\% attack success rate when injecting into $D_1$ on \mnist{} given the
attack is harder in order to make the trigger more invisible.
}, we want
to understand how injection methods affect the backdoors' survivability
against model updates.  We are also interested in if/how backdoor
survivability is affected by attacker-side parameters, including poison
ratio and trigger evasiveness.

We make 4 key observations from our results.

\para{Fine-tuning based model updates gradually remove one-shot backdoors.}
Figure~\ref{fig:oneshot} shows that using either injection method,
one-shot backdoor success rate degrades with additional model updates. After
a poisoned model learns backdoors as hidden classification features, each
updated version gradually forgets these features over time, if model updates
are benign and there are no poisoned samples to reinforce the model's memory.
While they were not designed to address backdoors, fine-tuning based
model updates {\em naturally} degrade one-shot backdoors over time.
This finding is consistent with our analytical study.
%
% ,
% as well as those discovered in continual learning,
% where learned ``features'' or ``examples'' tend to be
% forgotten by neural networks once they become absent in subsequent
% training data~\cite{delange2021continual, Forgetting}. These
% existing analyses, however, only considered non-adversarial scenarios like
% natural data dynamics or domain adaptation.
% It is interesting that intentionally injected attack features behave
% similarly, and are gradually ``forgotten'' by time-varying models. This property
% can clearly be leveraged to build {\em natural robustness} against backdoors.

% Researchers have found that tasks or examples encounter a forgetting process in neural networks~\cite{delange2021continual, Forgetting}
%  when they do not present in later continual learning. These analyses focus
%  on benign features for either tasks or examples while a backdoor is a malicious behavior
%  embedded into a model. In this paper, inspired from this phenomenon
%  we infer that the backdoor attacks will be forgotten with model updates fine-tuned with new training
%  data absent of poison data.

\para{Backdoors injected via full training usually survive longer than those injected
  via fine-tuning.}  Another interesting finding is that backdoors embedded
in $F_0$ usually carry more ``strength'' and survive longer against model updates
compared to those injected into $F_1$.  As shown in Figure~\ref{fig:oneshot_pr},
the backdoor survivability for poisoning $D_0$ (Figure~\ref{fig:oneshot_D0_pr}) is
generally higher than poisoning $D_1$ (Figure~\ref{fig:oneshot_D1_pr}) with
the same poison ratio. This is
likely because backdoors injected via full training embed the hidden features
more broadly into the model, making them harder to ``remove'' by subsequent
fine-tuning.  On the other hand, poisoning $D_0$ is generally more
challenging, because there is much more opportunity to apply backdoor
defenses and detectors on $F_0$ before its deployment.

\para{Higher poison ratio may degrade backdoor survivability.}
Existing works have shown that higher poison ratios embed a
backdoor with a higher success rate. In our theoretical analysis (Theorem~\ref{thm:backdoorremoval}), we also find that larger poison ratios lead to more iterations of fine-tuning for strongly convex loss functions. Surprisingly, larger poison ratio does not always improve backdoor survivability against model updates on DNN models.
Our findings indicate that as the proportion of poisoned data increases from zero but remains relatively low, the ability of the backdoor to survive improves. However, as the proportion of poisoned data surpasses a certain threshold, the backdoor survivability decreases and eventually reaches zero across all attack types and datasets. This suggests that a higher proportion of poisoned data can actually hasten the process of backdoor forgetting in DNN models. This also indicates the possible presence of 
a more complex theoretical relationship between backdoors and feature forgetting for DNN models, since they use more complicated loss functions. 
%which
%suggests the need for more in-depth %study. 
We leave this to future work.

\begin{figure}[t]
\centering
  \subfigure[One-shot poison in $D_0$]{
  \vspace{-0.1in}
  \includegraphics[width=\linewidth]{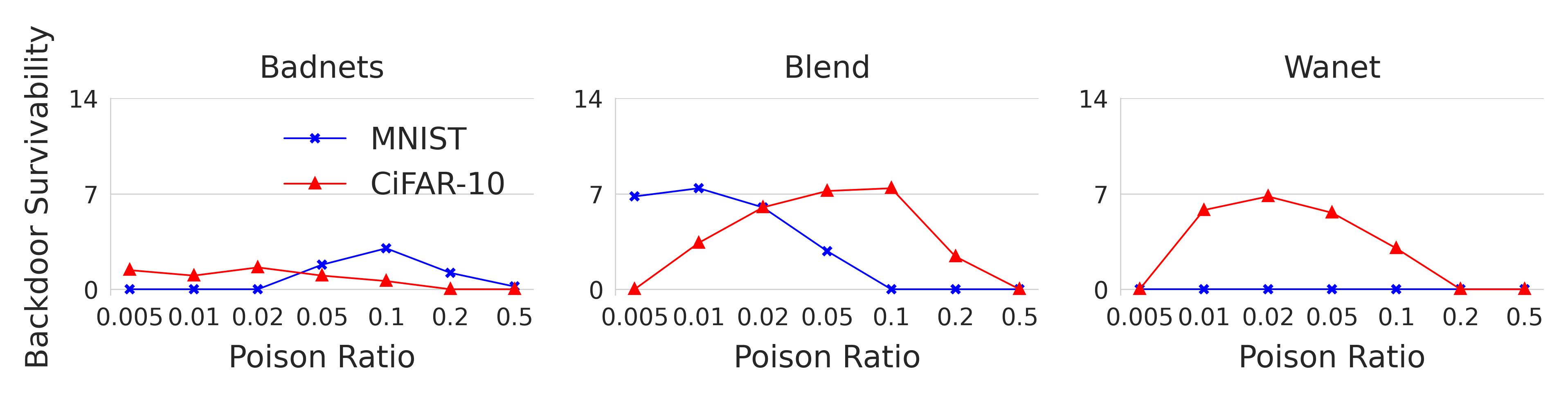}
  \vspace{-0.1in}
  \label{fig:oneshot_D0_pr}
  }
  \vspace{-0.1in}
  \subfigure[One-shot poison in $D_1$]{
  \vspace{-0.1in}
  \includegraphics[width=\linewidth]{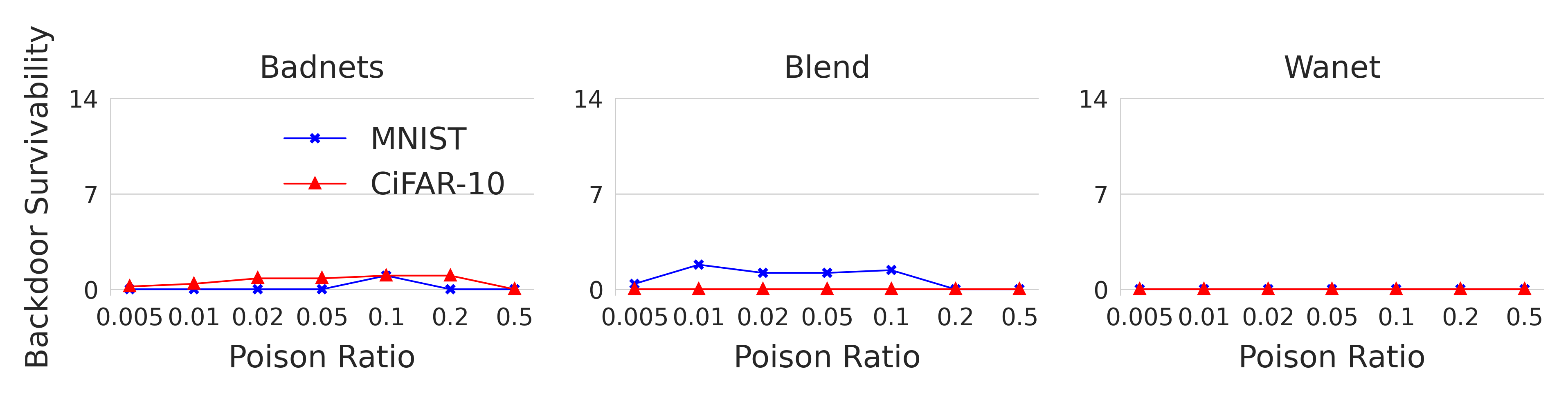}
  \vspace{-0.1in}
  \label{fig:oneshot_D1_pr}
  }
  \vspace{-0.1in}
\caption{Backdoor survivability for one shot poisoning
using different poison ratios for the 3 attack methods on \mnist{} and \cifar{}.
The results are averaged on 5 instances.}
\label{fig:oneshot_pr}
\vspace{-0.1in}
\end{figure}

\begin{figure}[t]
\centering
  \subfigure[One-shot poison in $D_0$]{
  \vspace{-0.1in}
  \includegraphics[width=.95\linewidth]{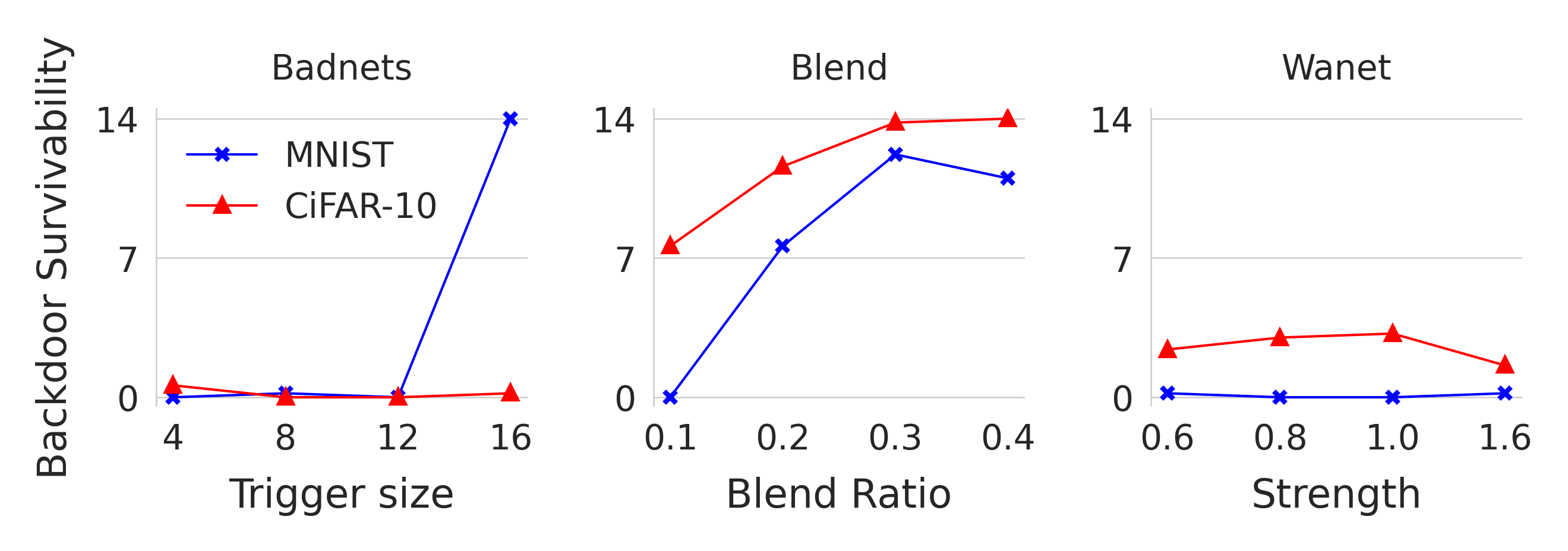}
  \vspace{-0.1in}
  \label{fig:oneshot_D0_te}
  \vspace{-0.1in}
  }
  \subfigure[One-shot poison in $D_1$]{
  \vspace{-0.1in}
  \includegraphics[width=.95\linewidth]{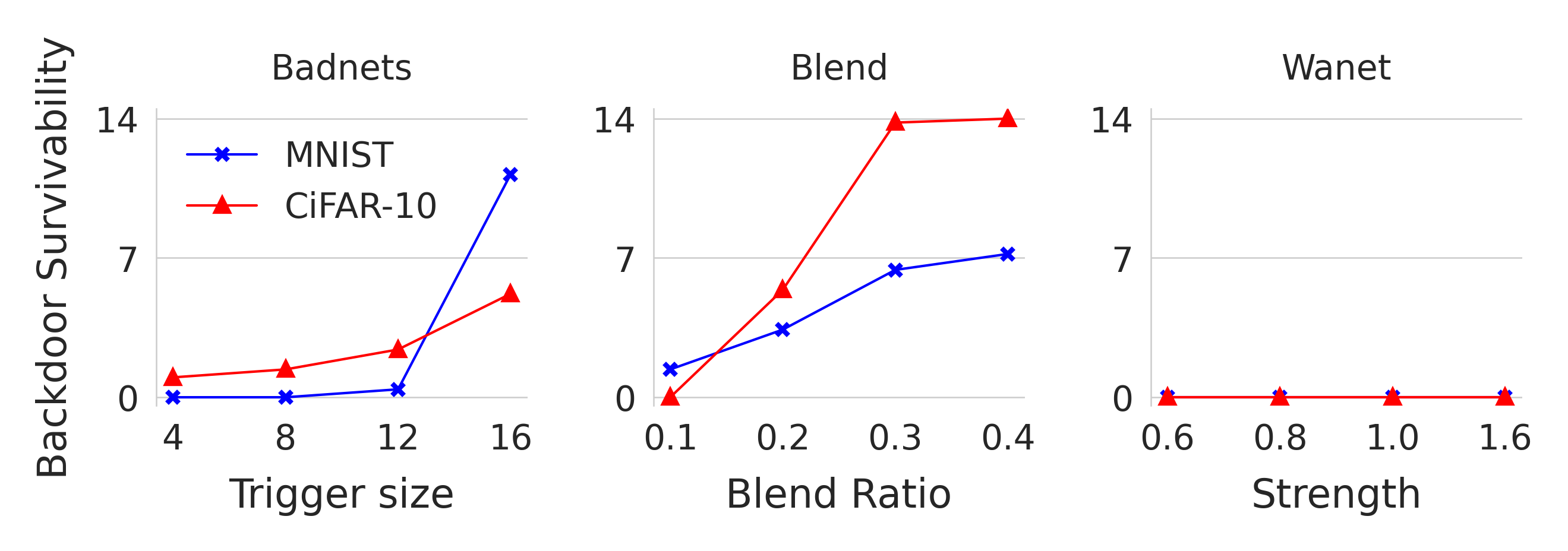}
  \vspace{-0.1in}
  \label{fig:oneshot_D1_te}
  \vspace{-0.1in}
  }\vspace{-0.1in}
\caption{Average backdoor survivability for one shot poisoning using different triggers
for the 3 attack methods on \mnist{} and \cifar{}.}
\vspace{-0.15in}
\label{fig:te}
\end{figure}

% Our empirical results shows that
% when poison ratio increases from 0 but is relatively small, the backdoor survivability increases,
% \fixme{which is aligned with our analytical study (Theorem~XX)}.
% However, when poison ratio increases over a certain rate, 
% the backdoor survivability gradually drops to 0 for all attacks and datasets.
% Thus, larger poison ratio can actually ``accelerate''
% the feature forgetting process. 
% This indicates the possible presence of a more
% complex relationship between backdoors and feature forgetting, which
% suggests the need for more in-depth study. We leave this to future work.

\para{There exists a tradeoff between trigger evasiveness and backdoor survivability.}
Our results indicates that when reducing the trigger evasiveness, which makes the
trigger more noticeable, the backdoor survivability increases.
As shown in Figure~\ref{fig:te}, when we increase the size of the trigger for Badnets,
the blend ratio of the trigger for Blend or increase the strength factor for Wanet,
which all make the backdoor trigger more visible, the backdoor survivability
increases correspondingly. As the most invisible attack, Wanet also has the smallest 
backdoor survivability. This is likely because larger/more obvious triggers introduce stronger
deviations on the model's decision manifolds.  However, using larger triggers
often means lower attack stealthiness and more visible perturbations.
In this case, physical backdoors using real world objects as triggers
might achieve a sweet spot of stealth and survivability~\cite{wenger2021backdoor}.

\vspace{-0.05in}
\subsection{Impact of Data Distribution Drift}
\label{sec:shift}
\vspace{-0.05in}
As discussed in \S\ref{subsec: exp_setup}, we emulate a dynamic data
environment by introducing parameterized distribution shifts using
image transformations.  So far our results assume the default data distribution
shift (i.e., changing angle by $4^\circ$ for \mnist{} and changing hue by a factor of $0.02$ for \cifar).  Next we examine the impact of
different distribution shifts on
backdoor survivability. Our goal is not to compare different
transformation types, but to examine how the volume of distribution
shifts affect backdoor survivability.
Figure~\ref{fig:datashift} plots, per transformation type, the backdoor $\gamma$-survivability
for one-shot poisoning on $D_0$, when varying the distribution shift step ($p$).
While $p$ is transformation-specific,  larger $p$
always means
heavier data distribution shifts over time.

\begin{figure}[t]
  \centering
  \vspace{-0.1in}
  \includegraphics[width=\linewidth]{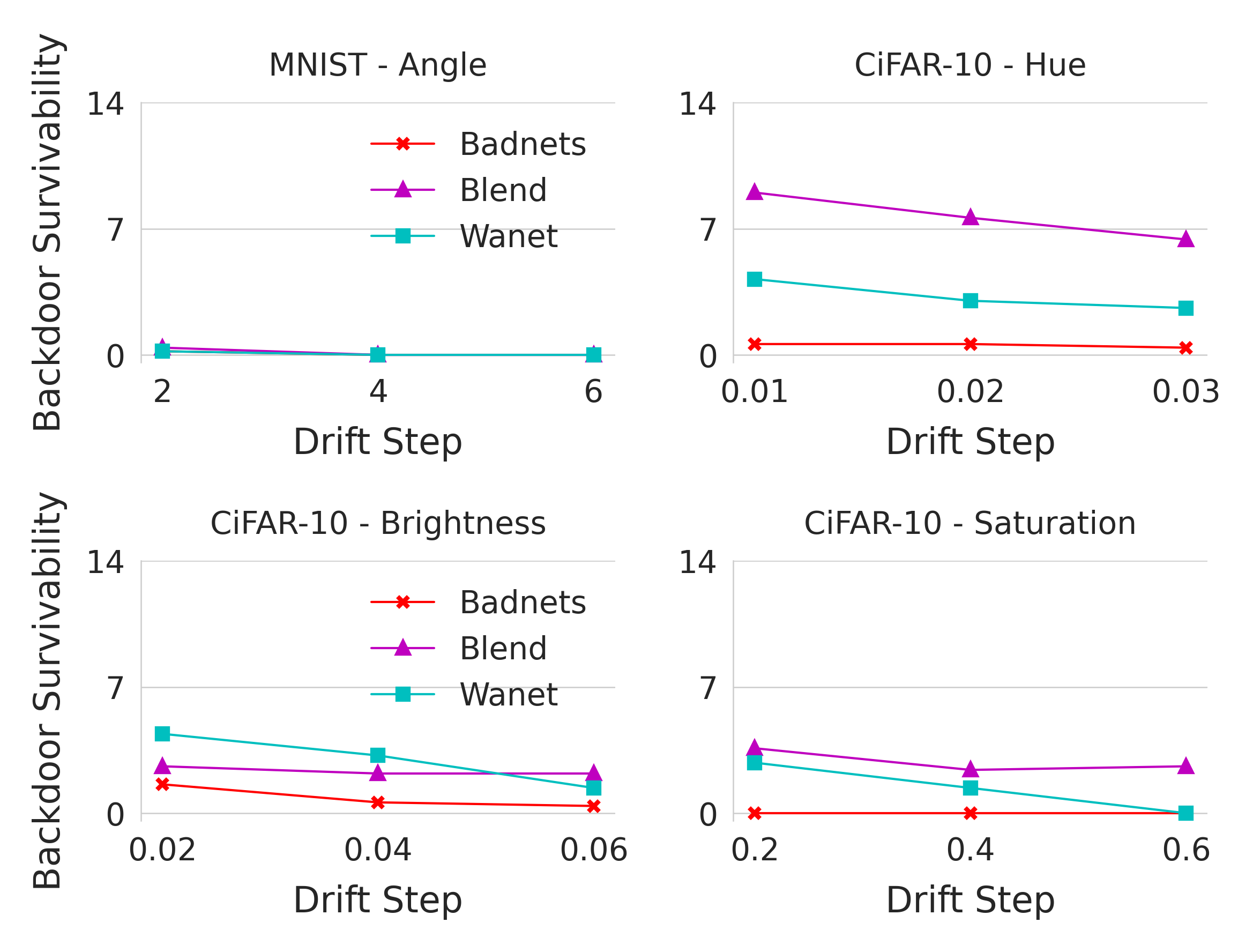}
  \vspace{-0.2in}
  \caption{Average backdoor survivability for one shot poisoning on $D_0$
  with different data distribution drift types and steps for the 3 attack methods on \mnist{} and \cifar{}.}
  \vspace{-0.2in}
  \label{fig:datashift}
\end{figure}

\para{Larger data distribution shifts accelerate backdoor forgetting.}
Our key observation is that, when the production data
experiences heavier changes over time, the backdoors are more
vulnerable to model updates. This is because when model updates
``force'' the time-varying model to learn more different data features, they also
accelerate the process of backdoor forgetting.  This observation also aligns with recent
work on continual
learning on old/new (benign) tasks~\cite{yin2020optimization},  which shows that
the larger the distance between two task distributions, the faster
the model forgets the old task it has learned previously. Different
from~\cite{yin2020optimization}, we focus on backdoor attacks rather
than benign tasks.  Overall, our findings further demonstrate the
importance of studying backdoor survivability,  since data
distribution shift is a common phenomenon in practice.

\begin{figure}[t]
\centering
  \subfigure[One-shot poison in $D_0$]{
  \includegraphics[width=.95\linewidth]{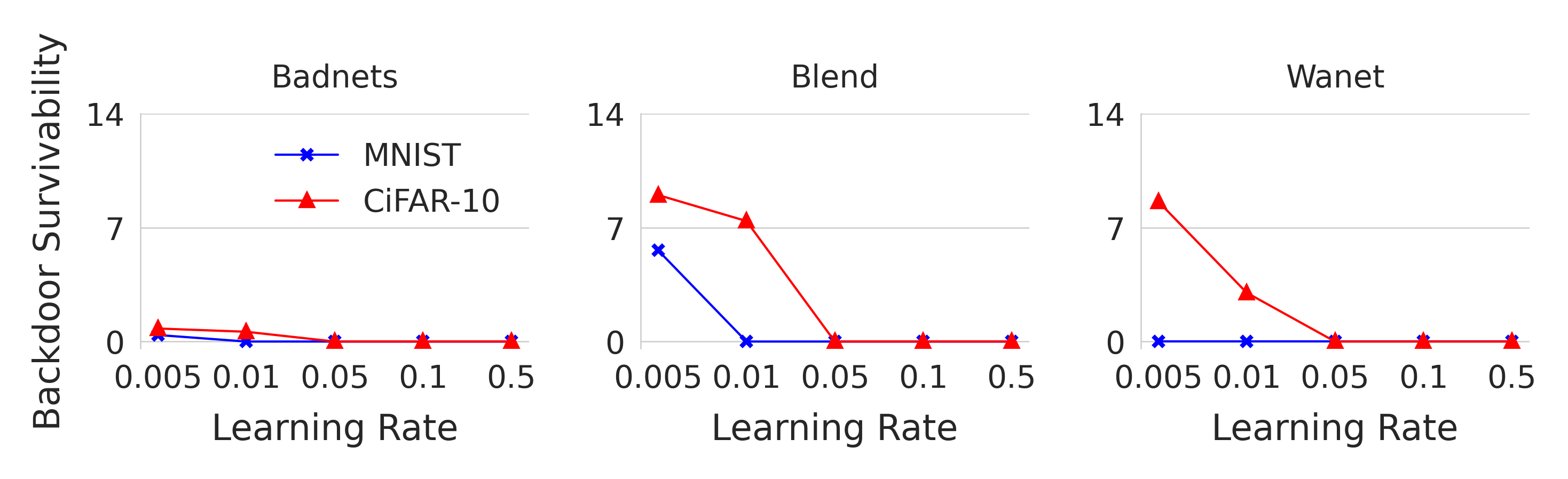}
  \label{fig:oneshot_D0_lr}
  \vspace{-0.1in}
  }
  \subfigure[One-shot poison in $D_1$]{
  \includegraphics[width=.95\linewidth]{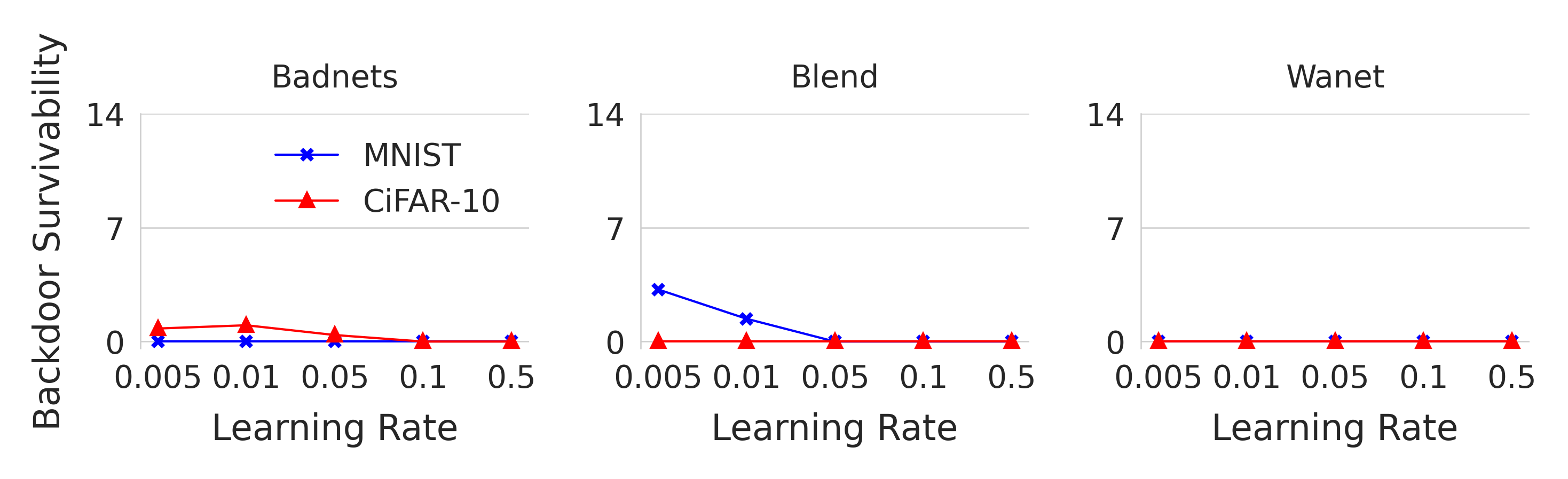}
  \label{fig:oneshot_D1_lr}
  \vspace{-0.1in}
  }\vspace{-0.1in}
\caption{Average backdoor survivability for one shot poisoning with different learning rates
during model updates for the 3 attack methods on \mnist{} and \cifar{}.}
\vspace{-0.1in}
\label{fig:lr}
\end{figure}

\begin{figure}[t]
  \centering
  \includegraphics[width=0.6\linewidth]{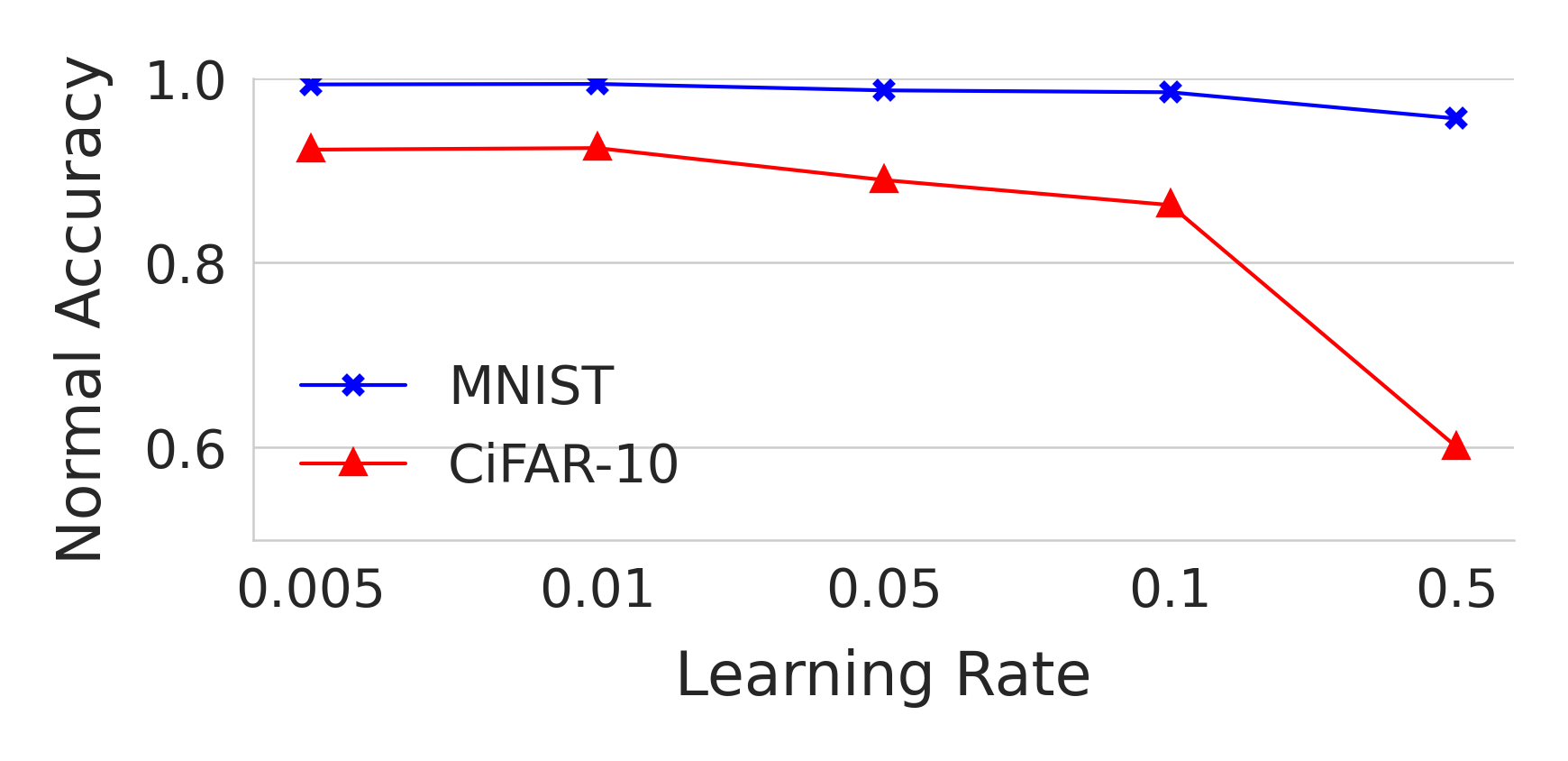}
  \vspace{-0.2in}
  \caption{Average normal accuracy after 15 model updates with different fine-tuning learning rates.}
  \vspace{-0.1in}
  \label{fig:acc_lr}
\end{figure}

\vspace{-0.05in}
\section{Smart Training Strategies to Reduce Backdoor Survivability}
\label{sec:training}
Our analytical study implies that more training steps and reasonably larger learning rates 
can accelerate the backdoor forgetting process in Theorem~\ref{thm:backdoorremoval} and Corollary~\ref{coro:lr}.
We now empirically verify them on \mnist{} and \cifar{} datasets.
We start with increasing the training epochs and learning rates during the model
fine-tuning for each model update event and confirm that the backdoor survivability
degrades correspondingly. However, simply increasing learning rates for stochastic gradient descent (SGD)
will reduce the normal accuracy. As suggested in Corollary~\ref{coro:lr},
the learning rate cannot be too large as that will cause converge to slow down drastically. Thus,
we consider a smarter learning rate scheduler: Slanted Triangular Learning Rates
(STLRs)~\cite{howard2018universal}, which first increases the learning rate to
a very large value and then gradually decreases the learning rate for convergence.
Our results show that by using STLR the backdoor survivability significantly drops
while the normal accuracy stays as high as the initial model training.

% Our empirical study demonstrates the {\em backdoor forgetting}
% property of time-varying DNN models.  It also suggests that
% intelligent training strategies, especially during model fine-tuning, can
% accelerate backdoor forgetting without compromising normal
% accuracy.  Recall that our baseline strategy for model fine-tuning is
% stochastic gradient descent (SGD)
%  with a weight decay and momentum factor, using $0.01$ learning rate
%  and 5 training epochs.

% We observe that factors like data distribution shift steps accelerate the backdoor forgetting
% process. This inspires us to investigate if the model owner could leverage training techniques
% to accelerate backdoor forgetting.
% In this section, we study potential strategies the model owner could do during the training phase.
% The goal of the model owner is to degrade the backdoor survivability without
% compromising model normal accuracy.

 \para{Increasing training epochs and learning rate.} We empirically find
 that both training time (epochs) and learning rate can be better configured
 to reduce backdoor survivability without harming normal accuracy.  
 % These
 % observations are specific to time-varying models and do not apply to static
 % models. 
We fine that increasing number of training epochs is inefficient, unstable 
 and costly in terms of decreasing backdoor survivibility (detailed results in Appendix~\ref{subsec:app_te}). 
 In the meanwhile, increasing learning rate has much more impact on
 backdoor survivability. As shown in  Figure~\ref{fig:lr}, the backdoor survivability decreases all to 0
 when increasing the learning rate from 0.005 to 0.5,.
 However, larger learning rates may prevent models from convergence. As shown in
 Figure~\ref{fig:acc_lr}, we can see that the model normal accuracy after 15 model
 updates drops significantly when we increase the learning rate ($99\%$ to $91\%$ for \mnist{}
 and $92\%$ to $60\%$ for \cifar{}).

\begin{figure}[t]
\centering
  \subfigure[One-shot poison in $D_0$]{
  \vspace{-0.1in}
  \includegraphics[width=\linewidth]{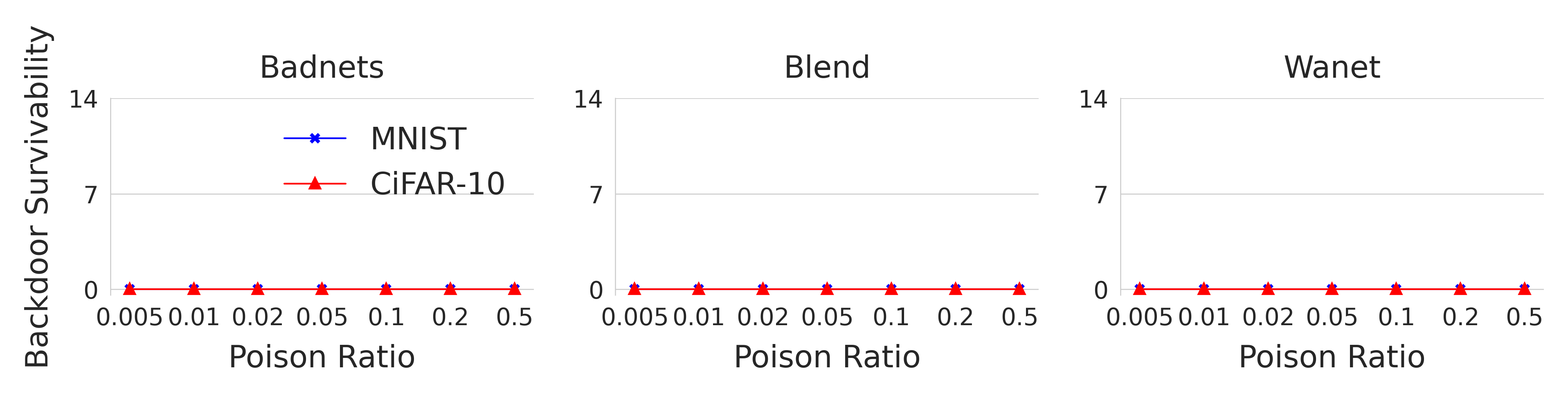}
  \vspace{-0.1in}
  \label{fig:oneshot_D0_pr_stlr}
  }
  \vspace{-0.1in}
  \subfigure[One-shot poison in $D_1$]{
  \vspace{-0.05in}
  \includegraphics[width=\linewidth]{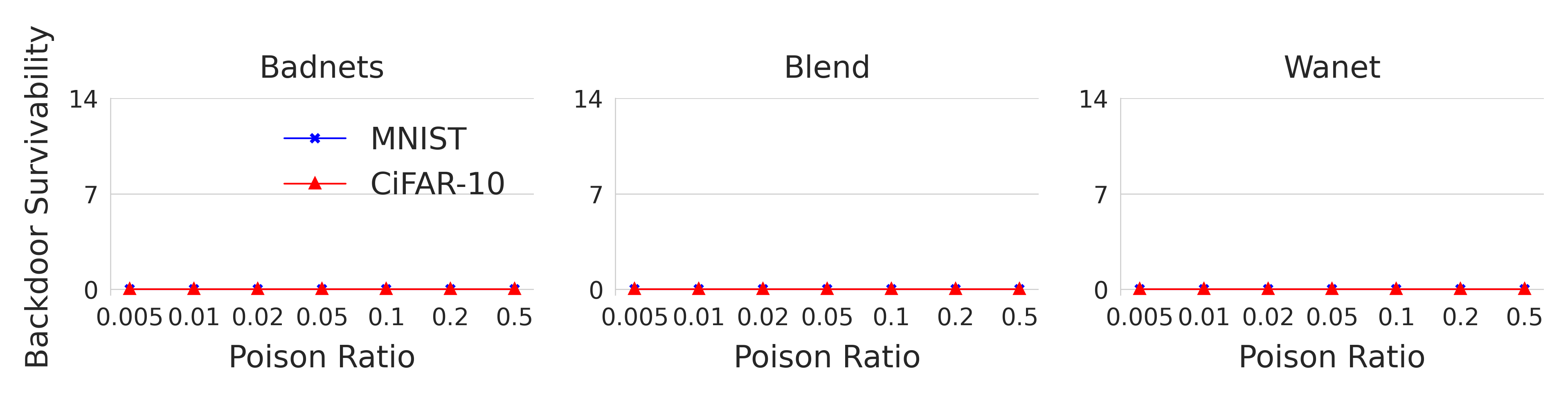}
  \label{fig:oneshot_D1_pr_stlr}
  }
  \vspace{-0.1in}
\caption{Average backdoor survivability for one shot poisoning
using different poison ratios with STLR with max lr = 0.5.}
\label{fig:oneshot_pr_stlr}
\vspace{-0.15in}
\end{figure}

\begin{figure}[t]
\centering
  \subfigure[One-shot poison in $D_0$]{
  \vspace{-0.1in}
  \includegraphics[width=.95\linewidth]{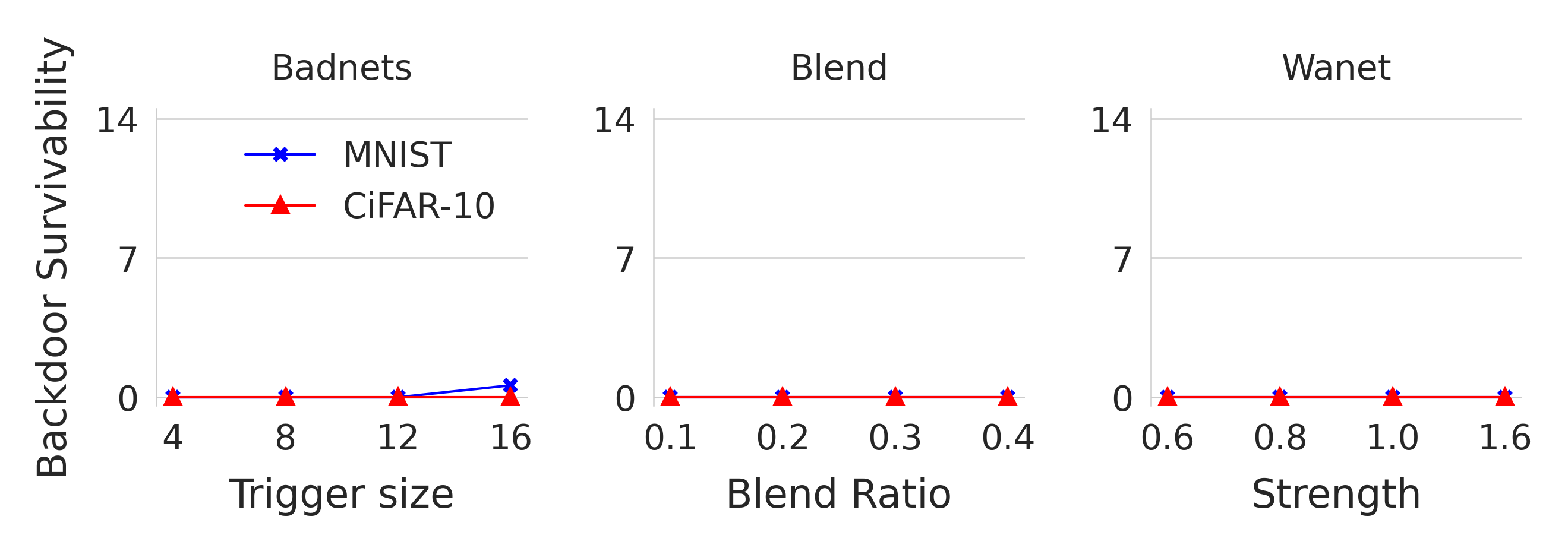}
  \vspace{-0.1in}
  \label{fig:oneshot_D0_te_stlr}
  \vspace{-0.1in}
  }
  \subfigure[One-shot poison in $D_1$]{
  \vspace{-0.1in}
  \includegraphics[width=.95\linewidth]{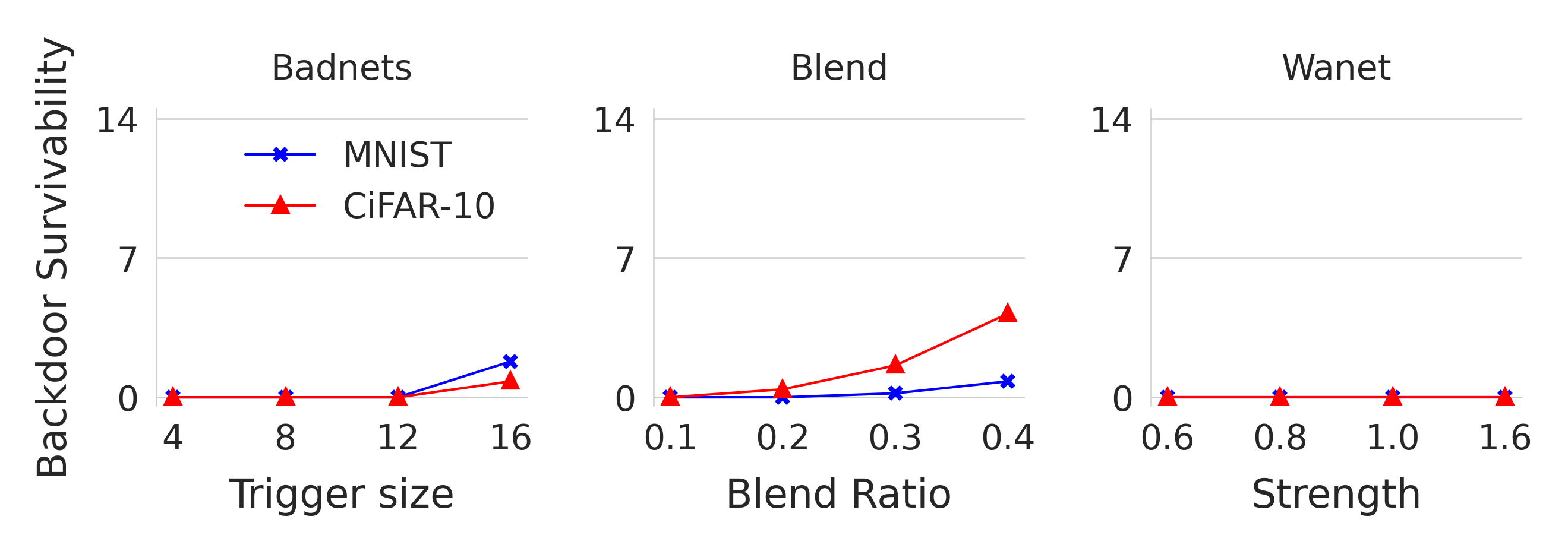}
  \label{fig:oneshot_D1_te_stlr}
  \vspace{-0.1in}
  }\vspace{-0.1in}
\caption{Average backdoor survivability for one shot poisoning using different triggers
 with STLR with max lr = 0.5.}
\vspace{-0.15in}
\label{fig:te_stlr}
\end{figure}

\vspace{-0.05in}
\para{Smarter learning rate scheduler (STLR).}  Inspired by Corollary~\ref{coro:lr},
we consider a smarter learning rate scheduler: Slanted Triangular Learning Rates
(STLRs)~\cite{howard2018universal}.  It first linearly increases the learning
rate from 0 to the max value (e.g., up to 0.5) and then linearly decays it
back to 0.  Interestingly, STLR is highly effective in terms of reducing
backdoor survivability.

%We find that using such a smarter learning rate not only increases the normal accuracy because of faster convergence
%but also significantly reduces backdoor survivability.

Figure~\ref{fig:oneshot_pr_stlr} lists the backdoor
survivability for one-shot poisoning attacks (on $D_1$ and
$D_0$), when updating models using STLR.   The baseline version (e.g.,
model updates using SGD) is in
Figure~\ref{fig:oneshot_pr}. Comparing the two figures, we see that
training using STLR naturally removes the backdoor, i.e., an
injected backdoor cannot survive past a single model update.
Figure~\ref{fig:te_stlr} shows that STLR significantly reduces the
backdoor survivability for all different types of triggers (compared to Figure~\ref{fig:te}).
At the same time, STLR maintains the normal accuracy at a high level after 15 model updates ($99\%$ for
\mnist{} and $91\%$ for \cifar{}).

\para{Summary.}  We find that model owners can significantly reduce backdoor
survivability by leveraging adaptive learning rate schedulers like STLR, which 
is much more efficient compared to increasing training epochs. 
In most cases, all three backdoor survivability
metrics drop to 0, meaning the backdoor is immediately removed by a single
fine-tuning update. This additional benefit only adds to the existing value
of these techniques as ways to enhance model accuracy.

\vspace{-0.1in}
\section{Discussion}
\label{sec:conclusion}
%\vspace{-0.1in}

In this paper, we take a first look at backdoor attacks in the real-world
context of \tvmodels{}. Guided by our theoretical analysis,
we introduce the first empirical metric on a backdoor's survivability on \tvmodels, study a range of factors
that may impact backdoor survivability, and propose an intelligent training strategy using
STLR to significantly reduce backdoor survivability.

Looking forward, our study also presents potential directions for future research
on the topic of backdoor attacks on \tvmodels. We discuss some below. 

\para{Novel defenses against backdoor attacks on \tvmodels.} Existing work has produced numerous defenses against backdoors in static models, e.g.,~\cite{gao2019strip, finepruning, wang2019neural, guo2020towards, liu2019abs,
qiao2019defending}. A natural question arises: ``{\em are existing backdoor defenses
effective on \tvmodels?}''  We believe the answer is no, and we need new defenses specifically designed for \tvmodels.

%We argue that they may not generalize to %time-varying models. 

%Although there are a large amount of %existing work on defending backdoor %attacks
%While existing works have produced numerous defenses against backdoors in static models 
%(e.g.,~\cite{gao2019strip, finepruning, %wang2019neural, guo2020towards, %liu2019abs,
%qiao2019defending}), they may not %generalize to time-varying models. 

%these defense mechanisms may not be able %to adapt to the
%time-varying scenario.
We believe that \tvmodels{} break common assumptions used in existing backdoor defenses. 
For example, model inspection defenses such as \nc{}~\cite{wang2019neural}
and \tabor{}~\cite{guo2020towards} try to identify the backdoor trigger as the smallest perturbation
that makes the model misclassify all perturbed inputs to a target label. If there is a backdoor in the model, the computed perturbation for the target label should be much smaller
than other benign labels, i.e., an anomaly. These defenses implicitly assume that an existing backdoor has a very high (e.g. 100\%) attack success rate. 
%during the trigger reverse engineering process,
%the algorithms trying to find a trigger cause 100\% attack success rate. 
However, we know that fine-tuning can reduce attack success rate to unpredictable levels over time, 
making it harder for defenses to reliably detect the anomalous misclassification behavior.
We can adapt by lowering the attack success rate threshold to 75\%, 50\% and 25\%, but this approach does not work. The average
anomaly index for \tabor{} on clean models increases to 3.31 when setting the attack success rate threshold to 25\%
and average anomaly index for \nc{} on clean models increases to 2.27 when setting the attack success rate threshold to 50\%
(any model with an anomaly index over 2 is considered a backdoored model).
Not only is setting new threshold values challenging, but doing so also lowers the detection performance for static backdoored models.
%Besides, when reducing the attack success rate threshold during reverse engineering,
%the defenses become ineffective on backdoored models without fine-tuning. 
More detailed results can be found in Appendix~\ref{subsec:defense}.

Another example is {\fp}~\cite{finepruning}, a model sanitization defense. It prunes a portion of neurons
from the model, and then fine-tunes the model with a small set of pure clean data. However, this approach
is problematic for time-varying models, as it would require continuous pruning of neurons, and continued pruning will lower model performance over time~\cite{he2016deep, tan2019efficientnet}.

\para{Limitations and future work.}
As the first study on backdoor survivability, our work faces several
limitations. First, our study focused on \tvmodels{} updated via fine-tuning (transfer learning). 
More research is needed to understand survivability of backdoors in \tvmodel{} 
updated via other mechanisms.
Second, our theoretical results consider fine-tuning with data drawn from the original distribution, 
and we leave the analysis of distribution drift on backdoor forgetting for future work.
Third, we focused on two commonly image datasets to demonstrate the property of ``backdoor
forgetting.''  We need more empirical and analytical work to characterize the relationship between
model updates and backdoor forgetting on other domains.  Finally, we trigger model
updates using controlled data dynamics via image
transformation. Experiments using a broader and more realistic category of data
dynamics may provide more insights on how to leverage natural data
variations (and thus model variations) to resist backdoor attacks.

% This paper takes the first glance of backdoor attacks  and point
% out the missing study of natural backdoor forgetting on models evolving
% overtime.
%   This  for
% poisoners. Future attack designs should take backdoor survivability into
% consideration.

%\input{defense}

% In the unusual situation where you want a paper to appear in the
% references without citing it in the main text, use \nocite
%\nocite{langley00}

\bibliography{time-variant-backdoor}

\begin{thebibliography}{66}
\providecommand{\natexlab}[1]{#1}
\providecommand{\url}[1]{\texttt{#1}}
\expandafter\ifx\csname urlstyle\endcsname\relax
  \providecommand{\doi}[1]{doi: #1}\else
  \providecommand{\doi}{doi: \begingroup \urlstyle{rm}\Url}\fi

\bibitem[Bagdasaryan et~al.(2020)Bagdasaryan, Veit, Hua, Estrin, and
  Shmatikov]{bagdasaryan2020backdoor}
Bagdasaryan, E., Veit, A., Hua, Y., Estrin, D., and Shmatikov, V.
\newblock How to backdoor federated learning.
\newblock In \emph{Proc. of International Conference on Artificial Intelligence
  and Statistics}, pp.\  2938--2948. PMLR, 2020.

\bibitem[Ben-David et~al.(2010)Ben-David, Blitzer, Crammer, Kulesza, Pereira,
  and Vaughan]{ben2010theory}
Ben-David, S., Blitzer, J., Crammer, K., Kulesza, A., Pereira, F., and Vaughan,
  J.~W.
\newblock A theory of learning from different domains.
\newblock \emph{Machine learning}, 79\penalty0 (1):\penalty0 151--175, 2010.

\bibitem[Bhagoji et~al.(2019)Bhagoji, Chakraborty, Mittal, and
  Calo]{pmlr-v97-bhagoji19a}
Bhagoji, A.~N., Chakraborty, S., Mittal, P., and Calo, S.
\newblock Analyzing federated learning through an adversarial lens.
\newblock In \emph{Proceedings of the 36th International Conference on Machine
  Learning}, pp.\  634--643, 2019.

\bibitem[Biggio et~al.(2012)Biggio, Nelson, and Laskov]{biggio2012poisoning}
Biggio, B., Nelson, B., and Laskov, P.
\newblock Poisoning attacks against support vector machines.
\newblock In \emph{Proc. of {ICML}}, pp.\  1467--1474, 2012.

\bibitem[Boyd \& Vandenberghe(2004)Boyd and Vandenberghe]{boyd2004convex}
Boyd, S. and Vandenberghe, L.
\newblock \emph{Convex Optimization}.
\newblock Cambridge university press, 2004.

\bibitem[Chen et~al.(2017)Chen, Liu, Li, Lu, and Song]{chen2017targeted}
Chen, X., Liu, C., Li, B., Lu, K., and Song, D.
\newblock Targeted backdoor attacks on deep learning systems using data
  poisoning.
\newblock \emph{arXiv preprint arXiv:1712.05526}, 2017.

\bibitem[Chen et~al.(2021)Chen, Luo, Ma, and Zhang]{chen2021active}
Chen, Y., Luo, H., Ma, T., and Zhang, C.
\newblock Active online learning with hidden shifting domains.
\newblock In \emph{International Conference on Artificial Intelligence and
  Statistics}, pp.\  2053--2061. PMLR, 2021.

\bibitem[Ganin \& Lempitsky(2015)Ganin and Lempitsky]{ganin2015unsupervised}
Ganin, Y. and Lempitsky, V.
\newblock Unsupervised domain adaptation by backpropagation.
\newblock In \emph{International conference on machine learning}, pp.\
  1180--1189. PMLR, 2015.

\bibitem[Gao et~al.(2019)Gao, Xu, Wang, Chen, Ranasinghe, and
  Nepal]{gao2019strip}
Gao, Y., Xu, C., Wang, D., Chen, S., Ranasinghe, D.~C., and Nepal, S.
\newblock Strip: A defence against trojan attacks on deep neural networks.
\newblock In \emph{Proc. of the 35th Annual Computer Security Applications
  Conference}, pp.\  113--125, 2019.

\bibitem[Gao et~al.(2020)Gao, Doan, Zhang, Ma, Zhang, Fu, Nepal, and
  Kim]{gao2020backdoor}
Gao, Y., Doan, B.~G., Zhang, Z., Ma, S., Zhang, J., Fu, A., Nepal, S., and Kim,
  H.
\newblock Backdoor attacks and countermeasures on deep learning: A
  comprehensive review.
\newblock \emph{arXiv preprint arXiv:2007.10760}, 2020.

\bibitem[GoogleCloud(2023)]{MLops}
GoogleCloud.
\newblock Mlops: Continuous delivery and automation pipelines in machine
  learning.
\newblock
  \url{https://cloud.google.com/architecture/mlops-continuous-delivery-and-automation-pipelines-in-machine-learning},
  2023.

\bibitem[Gu et~al.(2017)Gu, Dolan-Gavitt, and Garg]{gu2017badnets}
Gu, T., Dolan-Gavitt, B., and Garg, S.
\newblock Badnets: Identifying vulnerabilities in the machine learning model
  supply chain.
\newblock In \emph{Proc. of Machine Learning and Computer Security Workshop},
  2017.

\bibitem[Guo et~al.(2020)Guo, Wang, Xu, Xing, Du, and Song]{guo2020towards}
Guo, W., Wang, L., Xu, Y., Xing, X., Du, M., and Song, D.
\newblock Towards inspecting and eliminating trojan backdoors in deep neural
  networks.
\newblock In \emph{2020 IEEE International Conference on Data Mining (ICDM)},
  pp.\  162--171. IEEE, 2020.

\bibitem[He et~al.(2016)He, Zhang, Ren, and Sun]{he2016deep}
He, K., Zhang, X., Ren, S., and Sun, J.
\newblock Deep residual learning for image recognition.
\newblock In \emph{Proc. of {CVPR}}, pp.\  770--778, 2016.

\bibitem[Hoi et~al.(2014)Hoi, Wang, and Zhao]{hoi2014libol}
Hoi, S.~C., Wang, J., and Zhao, P.
\newblock Libol: A library for online learning algorithms.
\newblock \emph{Journal of Machine Learning Research}, 15\penalty0
  (1):\penalty0 495, 2014.

\bibitem[Hoi et~al.(2021)Hoi, Sahoo, Lu, and Zhao]{hoi2021online}
Hoi, S.~C., Sahoo, D., Lu, J., and Zhao, P.
\newblock Online learning: A comprehensive survey.
\newblock \emph{Neurocomputing}, 459:\penalty0 249--289, 2021.

\bibitem[Howard \& Ruder(2018)Howard and Ruder]{howard2018universal}
Howard, J. and Ruder, S.
\newblock Universal language model fine-tuning for text classification.
\newblock In \emph{Proc. of the 56th Annual Meeting of the Association for
  Computational Linguistics (Volume 1: Long Papers)}, pp.\  328--339, 2018.

\bibitem[Huang et~al.(2017)Huang, Liu, Van Der~Maaten, and
  Weinberger]{huang2017densely}
Huang, G., Liu, Z., Van Der~Maaten, L., and Weinberger, K.~Q.
\newblock Densely connected convolutional networks.
\newblock In \emph{Proc. of {CVPR}}, pp.\  4700--4708, 2017.

\bibitem[Iandola et~al.(2016)Iandola, Moskewicz, Ashraf, and
  Keutzer]{iandola2016firecaffe}
Iandola, F.~N., Moskewicz, M.~W., Ashraf, K., and Keutzer, K.
\newblock Firecaffe: near-linear acceleration of deep neural network training
  on compute clusters.
\newblock In \emph{Proceedings of the IEEE conference on computer vision and
  pattern recognition}, pp.\  2592--2600, 2016.

\bibitem[Ilyas et~al.(2022)Ilyas, Park, Engstrom, Leclerc, and
  Madry]{ilyas2022datamodels}
Ilyas, A., Park, S.~M., Engstrom, L., Leclerc, G., and Madry, A.
\newblock Datamodels: Predicting predictions from training data.
\newblock \emph{arXiv preprint arXiv:2202.00622}, 2022.

\bibitem[Krizhevsky et~al.(2009)]{krizhevsky2009learning}
Krizhevsky, A. et~al.
\newblock Learning multiple layers of features from tiny images.
\newblock 2009.

\bibitem[Kumar et~al.(2020{\natexlab{a}})Kumar, Ma, and
  Liang]{kumar2020understanding}
Kumar, A., Ma, T., and Liang, P.
\newblock Understanding self-training for gradual domain adaptation.
\newblock In \emph{Proc. of {ICML}}, pp.\  5468--5479. PMLR,
  2020{\natexlab{a}}.

\bibitem[Kumar et~al.(2020{\natexlab{b}})Kumar, Nystr{\"o}m, Lambert, Marshall,
  Goertzel, Comissoneru, Swann, and Xia]{kumar2020adversarial}
Kumar, R. S.~S., Nystr{\"o}m, M., Lambert, J., Marshall, A., Goertzel, M.,
  Comissoneru, A., Swann, M., and Xia, S.
\newblock Adversarial machine learning-industry perspectives.
\newblock In \emph{Proc. of 2020 IEEE Security and Privacy Workshops (SPW)},
  pp.\  69--75. IEEE, 2020{\natexlab{b}}.

\bibitem[Kuwaranancharoen \& Sundaram(2018)Kuwaranancharoen and
  Sundaram]{kuwaranancharoen2018location}
Kuwaranancharoen, K. and Sundaram, S.
\newblock On the location of the minimizer of the sum of two strongly convex
  functions.
\newblock In \emph{2018 IEEE Conference on Decision and Control (CDC)}, pp.\
  1769--1774. IEEE, 2018.

\bibitem[Leclerc et~al.(2022)Leclerc, Ilyas, Engstrom, Park, Salman, and
  Madry]{leclerc2022ffcv}
Leclerc, G., Ilyas, A., Engstrom, L., Park, S.~M., Salman, H., and Madry, A.
\newblock ffcv.
\newblock \url{https://github.com/libffcv/ffcv/}, 2022.

\bibitem[LeCun et~al.(1998)LeCun, Bottou, Bengio, and
  Haffner]{lecun1998gradient}
LeCun, Y., Bottou, L., Bengio, Y., and Haffner, P.
\newblock Gradient-based learning applied to document recognition.
\newblock \emph{Proc. of {IEEE S\&P}}, 86\penalty0 (11):\penalty0 2278--2324,
  1998.

\bibitem[Lin et~al.(2020)Lin, Xu, Liu, and Zhang]{lin2020composite}
Lin, J., Xu, L., Liu, Y., and Zhang, X.
\newblock Composite backdoor attack for deep neural network by mixing existing
  benign features.
\newblock In \emph{Proc. of {CCS}}, pp.\  113--131, 2020.

\bibitem[Liu et~al.(2021)Liu, Wang, and Long]{liu2021cycle}
Liu, H., Wang, J., and Long, M.
\newblock Cycle self-training for domain adaptation.
\newblock \emph{Advances in Neural Information Processing Systems},
  34:\penalty0 22968--22981, 2021.

\bibitem[Liu et~al.(2018{\natexlab{a}})Liu, Dolan-Gavitt, and
  Garg]{finepruning}
Liu, K., Dolan-Gavitt, B., and Garg, S.
\newblock Fine-pruning: Defending against backdooring attacks on deep neural
  networks.
\newblock In \emph{Proc. of RAID}, 2018{\natexlab{a}}.

\bibitem[Liu et~al.(2018{\natexlab{b}})Liu, Ma, Aafer, Lee, Zhai, Wang, and
  Zhang]{liu2018trojaning}
Liu, Y., Ma, S., Aafer, Y., Lee, W.-C., Zhai, J., Wang, W., and Zhang, X.
\newblock Trojaning attack on neural networks.
\newblock In \emph{Proc. of {NDSS}}, 2018{\natexlab{b}}.

\bibitem[Liu et~al.(2019)Liu, Lee, Tao, Ma, Aafer, and Zhang]{liu2019abs}
Liu, Y., Lee, W.-C., Tao, G., Ma, S., Aafer, Y., and Zhang, X.
\newblock Abs: Scanning neural networks for back-doors by artificial brain
  stimulation.
\newblock In \emph{Proc. of {CCS}}, pp.\  1265--1282, 2019.

\bibitem[Liu et~al.(2020)Liu, Ma, Bailey, and Lu]{liu2020reflection}
Liu, Y., Ma, X., Bailey, J., and Lu, F.
\newblock Reflection backdoor: A natural backdoor attack on deep neural
  networks.
\newblock In \emph{European Conference on Computer Vision}, pp.\  182--199.
  Springer, 2020.

\bibitem[Nahar et~al.(2022)Nahar, Zhou, Lewis, and
  K{\"a}stner]{nahar2022collaboration}
Nahar, N., Zhou, S., Lewis, G., and K{\"a}stner, C.
\newblock Collaboration challenges in building ml-enabled systems:
  Communication, documentation, engineering, and process.
\newblock In \emph{Proc. of the 44th International Conference on Software
  Engineering (ICSE)}, 5 2022.
\newblock URL \url{https://arxiv.org/abs/2110.10234}.

\bibitem[Nguyen \& Tran(2021)Nguyen and Tran]{nguyen2021wanet}
Nguyen, A. and Tran, A.
\newblock Wanet-imperceptible warping-based backdoor attack.
\newblock 2021.

\bibitem[Nguyen \& Tran(2020)Nguyen and Tran]{nguyen2020input}
Nguyen, T.~A. and Tran, A.
\newblock Input-aware dynamic backdoor attack.
\newblock \emph{Advances in Neural Information Processing Systems},
  33:\penalty0 3454--3464, 2020.

\bibitem[Pang et~al.(2020)Pang, Shen, Zhang, Ji, Vorobeychik, Luo, Liu, and
  Wang]{pang2020tale}
Pang, R., Shen, H., Zhang, X., Ji, S., Vorobeychik, Y., Luo, X., Liu, A., and
  Wang, T.
\newblock A tale of evil twins: Adversarial inputs versus poisoned models.
\newblock In \emph{Proceedings of the 2020 ACM SIGSAC Conference on Computer
  and Communications Security}, pp.\  85--99, 2020.

\bibitem[Pang et~al.(2022)Pang, Zhang, Gao, Xi, Ji, Cheng, Luo, and
  Wang]{pang2022trojanzoo}
Pang, R., Zhang, Z., Gao, X., Xi, Z., Ji, S., Cheng, P., Luo, X., and Wang, T.
\newblock Trojanzoo: Towards unified, holistic, and practical evaluation of
  neural backdoors.
\newblock In \emph{2022 IEEE 7th European Symposium on Security and Privacy
  (EuroS\&P)}, pp.\  684--702. IEEE, 2022.

\bibitem[Pang et~al.(2021)Pang, Yang, Dong, Su, and Zhu]{pang2021accumulative}
Pang, T., Yang, X., Dong, Y., Su, H., and Zhu, J.
\newblock Accumulative poisoning attacks on real-time data.
\newblock \emph{Advances in Neural Information Processing Systems},
  34:\penalty0 2899--2912, 2021.

\bibitem[Patel et~al.(2015)Patel, Gopalan, Li, and Chellappa]{patel2015visual}
Patel, V.~M., Gopalan, R., Li, R., and Chellappa, R.
\newblock Visual domain adaptation: A survey of recent advances.
\newblock \emph{IEEE signal processing magazine}, 32\penalty0 (3):\penalty0
  53--69, 2015.

\bibitem[Qiao et~al.(2019)Qiao, Yang, and Li]{qiao2019defending}
Qiao, X., Yang, Y., and Li, H.
\newblock Defending neural backdoors via generative distribution modeling.
\newblock \emph{Advances in neural information processing systems}, 32, 2019.

\bibitem[Rakhlin et~al.(2011)Rakhlin, Shamir, and Sridharan]{rakhlin2011making}
Rakhlin, A., Shamir, O., and Sridharan, K.
\newblock Making gradient descent optimal for strongly convex stochastic
  optimization.
\newblock \emph{arXiv preprint arXiv:1109.5647}, 2011.

\bibitem[Read et~al.(2012)Read, Bifet, Pfahringer, and Holmes]{read2012batch}
Read, J., Bifet, A., Pfahringer, B., and Holmes, G.
\newblock Batch-incremental versus instance-incremental learning in dynamic and
  evolving data.
\newblock In \emph{Proc. of International symposium on intelligent data
  analysis}, pp.\  313--323. Springer, 2012.

\bibitem[Sahoo et~al.(2018)Sahoo, Pham, Lu, and Hoi]{sahoo2018online}
Sahoo, D., Pham, Q., Lu, J., and Hoi, S.~C.
\newblock Online deep learning: learning deep neural networks on the fly.
\newblock In \emph{Proc. of the 27th International Joint Conference on
  Artificial Intelligence}, pp.\  2660--2666, 2018.

\bibitem[Sankaranarayanan et~al.(2018)Sankaranarayanan, Balaji, Jain, Lim, and
  Chellappa]{sankaranarayanan2018learning}
Sankaranarayanan, S., Balaji, Y., Jain, A., Lim, S.~N., and Chellappa, R.
\newblock Learning from synthetic data: Addressing domain shift for semantic
  segmentation.
\newblock In \emph{Proceedings of the IEEE conference on computer vision and
  pattern recognition}, pp.\  3752--3761, 2018.

\bibitem[Steinhardt et~al.(2017)Steinhardt, Koh, and
  Liang]{steinhardt2017certified}
Steinhardt, J., Koh, P. W.~W., and Liang, P.~S.
\newblock Certified defenses for data poisoning attacks.
\newblock \emph{Advances in neural information processing systems}, 30, 2017.

\bibitem[Tajbakhsh et~al.(2016)Tajbakhsh, Shin, Gurudu, Hurst, Kendall, Gotway,
  and Liang]{tajbakhsh2016convolutional}
Tajbakhsh, N., Shin, J.~Y., Gurudu, S.~R., Hurst, R.~T., Kendall, C.~B.,
  Gotway, M.~B., and Liang, J.
\newblock Convolutional neural networks for medical image analysis: Full
  training or fine tuning?
\newblock \emph{IEEE transactions on medical imaging}, 35\penalty0
  (5):\penalty0 1299--1312, 2016.

\bibitem[Tan \& Le(2019)Tan and Le]{tan2019efficientnet}
Tan, M. and Le, Q.~V.
\newblock Efficientnet: Rethinking model scaling for convolutional neural
  networks.
\newblock In \emph{International Conference on Machine Learning}, pp.\
  6105--6114, 2019.

\bibitem[Tang et~al.(2020)Tang, Du, Liu, Yang, and Hu]{tang2020embarrassingly}
Tang, R., Du, M., Liu, N., Yang, F., and Hu, X.
\newblock An embarrassingly simple approach for trojan attack in deep neural
  networks.
\newblock In \emph{Proceedings of the 26th ACM SIGKDD International Conference
  on Knowledge Discovery \& Data Mining}, pp.\  218--228, 2020.

\bibitem[Torrey \& Shavlik(2010)Torrey and Shavlik]{torrey2010transfer}
Torrey, L. and Shavlik, J.
\newblock Transfer learning.
\newblock In \emph{Handbook of research on machine learning applications and
  trends: algorithms, methods, and techniques}, pp.\  242--264. IGI global,
  2010.

\bibitem[Tran et~al.(2018)Tran, Li, and Madry]{tran2018spectral}
Tran, B., Li, J., and Madry, A.
\newblock Spectral signatures in backdoor attacks.
\newblock \emph{Advances in neural information processing systems}, 31, 2018.

\bibitem[Veldanda \& Garg(2020)Veldanda and Garg]{veldanda2020evaluating}
Veldanda, A. and Garg, S.
\newblock On evaluating neural network backdoor defenses.
\newblock \emph{arXiv preprint arXiv:2010.12186}, 2020.

\bibitem[Wang et~al.(2019{\natexlab{a}})Wang, Yao, Shan, Li, Viswanath, Zheng,
  and Zhao]{wang2019neural}
Wang, B., Yao, Y., Shan, S., Li, H., Viswanath, B., Zheng, H., and Zhao, B.~Y.
\newblock Neural cleanse: Identifying and mitigating backdoor attacks in neural
  networks.
\newblock In \emph{Proc. of {IEEE S\&P}}, pp.\  707--723. IEEE,
  2019{\natexlab{a}}.

\bibitem[Wang et~al.(2020)Wang, Sreenivasan, Rajput, Vishwakarma, Agarwal,
  Sohn, Lee, and Papailiopoulos]{wang2020attack}
Wang, H., Sreenivasan, K., Rajput, S., Vishwakarma, H., Agarwal, S., Sohn,
  J.-y., Lee, K., and Papailiopoulos, D.
\newblock Attack of the tails: Yes, you really can backdoor federated learning.
\newblock \emph{Advances in Neural Information Processing Systems},
  33:\penalty0 16070--16084, 2020.

\bibitem[Wang \& Deng(2018)Wang and Deng]{wang2018deep}
Wang, M. and Deng, W.
\newblock Deep visual domain adaptation: A survey.
\newblock \emph{Neurocomputing}, 312:\penalty0 135--153, 2018.

\bibitem[Wang \& Chaudhuri(2018)Wang and Chaudhuri]{wang2018data}
Wang, Y. and Chaudhuri, K.
\newblock Data poisoning attacks against online learning.
\newblock \emph{arXiv preprint arXiv:1808.08994}, 2018.

\bibitem[Wang et~al.(2019{\natexlab{b}})Wang, Jha, and
  Chaudhuri]{wang2019investigation}
Wang, Y., Jha, S., and Chaudhuri, K.
\newblock An investigation of data poisoning defenses for online learning.
\newblock \emph{arXiv preprint arXiv:1905.12121}, 2019{\natexlab{b}}.

\bibitem[Weiss et~al.(2016)Weiss, Khoshgoftaar, and Wang]{weiss2016survey}
Weiss, K., Khoshgoftaar, T.~M., and Wang, D.
\newblock A survey of transfer learning.
\newblock \emph{Journal of Big data}, 3\penalty0 (1):\penalty0 1--40, 2016.

\bibitem[Wenger et~al.(2021)Wenger, Passananti, Bhagoji, Yao, Zheng, and
  Zhao]{wenger2021backdoor}
Wenger, E., Passananti, J., Bhagoji, A.~N., Yao, Y., Zheng, H., and Zhao, B.~Y.
\newblock Backdoor attacks against deep learning systems in the physical world.
\newblock In \emph{Proc. of {CVPR}}, pp.\  6206--6215, 2021.

\bibitem[Xiao et~al.(2012)Xiao, Xiao, and Eckert]{xiao2012adversarial}
Xiao, H., Xiao, H., and Eckert, C.
\newblock Adversarial label flips attack on support vector machines.
\newblock In \emph{Proc. of the 20th European Conference on Artificial
  Intelligence}, pp.\  870--875, 2012.

\bibitem[Xie et~al.(2019)Xie, Huang, Chen, and Li]{xie2019dba}
Xie, C., Huang, K., Chen, P.-Y., and Li, B.
\newblock Dba: Distributed backdoor attacks against federated learning.
\newblock In \emph{Proc. of {ICLR}}, 2019.

\bibitem[Yao et~al.(2019)Yao, Li, Zheng, and Zhao]{yao2019latent}
Yao, Y., Li, H., Zheng, H., and Zhao, B.~Y.
\newblock Latent backdoor attacks on deep neural networks.
\newblock In \emph{Proc. of {CCS}}, pp.\  2041--2055, 2019.

\bibitem[Yin et~al.(2020)Yin, Farajtabar, Li, Levine, and
  Mott]{yin2020optimization}
Yin, D., Farajtabar, M., Li, A., Levine, N., and Mott, A.
\newblock Optimization and generalization of regularization-based continual
  learning: a loss approximation viewpoint.
\newblock \emph{arXiv preprint arXiv:2006.10974}, 2020.

\bibitem[Yoon et~al.(2018)Yoon, Yang, Lee, and Hwang]{yoon2018lifelong}
Yoon, J., Yang, E., Lee, J., and Hwang, S.~J.
\newblock Lifelong learning with dynamically expandable networks.
\newblock In \emph{Proc. of {ICLR}}, 2018.

\bibitem[Yosinski et~al.(2014)Yosinski, Clune, Bengio, and
  Lipson]{yosinski2014transferable}
Yosinski, J., Clune, J., Bengio, Y., and Lipson, H.
\newblock How transferable are features in deep neural networks?
\newblock \emph{Advances in neural information processing systems}, 27, 2014.

\bibitem[Zhang et~al.(2020)Zhang, Feng, Wang, He, Wang, Li, and
  Zhang]{retrain1}
Zhang, Y., Feng, F., Wang, C., He, X., Wang, M., Li, Y., and Zhang, Y.
\newblock How to retrain recommender system? a sequential meta-learning method.
\newblock In \emph{Proc. of the 43rd International ACM SIGIR Conference on
  Research and Development in Information Retrieval}, pp.\  1479--1488, 2020.

\bibitem[Zhuang et~al.(2020)Zhuang, Qi, Duan, Xi, Zhu, Zhu, Xiong, and
  He]{zhuang2020comprehensive}
Zhuang, F., Qi, Z., Duan, K., Xi, D., Zhu, Y., Zhu, H., Xiong, H., and He, Q.
\newblock A comprehensive survey on transfer learning.
\newblock \emph{Proc. of {IEEE S\&P}}, 109\penalty0 (1):\penalty0 43--76, 2020.

\end{thebibliography}
\bibliographystyle{icml2023}

%%%%%%%%%%%%%%%%%%%%%%%%%%%%%%%%%%%%%%%%%%%%%%%%%%%%%%%%%%%%%%%%%%%%%%%%%%%%%%%
%%%%%%%%%%%%%%%%%%%%%%%%%%%%%%%%%%%%%%%%%%%%%%%%%%%%%%%%%%%%%%%%%%%%%%%%%%%%%%%
% APPENDIX
%%%%%%%%%%%%%%%%%%%%%%%%%%%%%%%%%%%%%%%%%%%%%%%%%%%%%%%%%%%%%%%%%%%%%%%%%%%%%%%
%%%%%%%%%%%%%%%%%%%%%%%%%%%%%%%%%%%%%%%%%%%%%%%%%%%%%%%%%%%%%%%%%%%%%%%%%%%%%%%
\newpage
\appendix

\newpage
\appendix
\onecolumn
We first present the proofs for our theoretical analysis in Appendix~\ref{sec:app_theory}.
Then, we give more details about our experimental setups in Appendix~\ref{sec:app_setup}, followed by more empirical
results on backdoor survivability in Appendix~\ref{sec:app_expr_results}.

\section{Proofs}
\label{sec:app_theory}

\poisonslowdown*

\begin{proof}[Proof of Theorem~\ref{lemma:poison_slowdown}]
Since $L_{D_0}$ and $L_p$ are both strongly convex by assumption, $L_{\text{mix}}= \alpha L_p + (1- \alpha) L_{D_0}$ is also strongly convex. From \citet{boyd2004convex}, the strong convexity parameter of $L_{\text{mix}}$ is $\sigma_{\text{mix}} = \alpha \sigma_p + (1-\alpha) \sigma_b$.

Then, from Eq. (4) of \citet{kuwaranancharoen2018location} on necessary conditions for the minimizer of the sum of two strongly convex functions, we have, for all $\btheta \in \Theta$,

\begin{align}
\begin{split}
    \left( \nabla L_{\text{mix}}(\btheta) - \nabla L_{\text{mix}}(\btheta_{\text{mix}}) \right)^\intercal (\btheta - \btheta_{\text{mix}}) &\geq \sigma_{\text{mix}} \| \btheta - \btheta_{\text{mix}} \|^2 \\
    \Rightarrow \left( \nabla L_{\text{mix}}(\btheta^*_0) - \nabla L_{\text{mix}}(\btheta_{\text{mix}}) \right)^\intercal \frac{(\btheta^*_0 - \btheta_{\text{mix}})}{\| \btheta^*_0 - \btheta_{\text{mix}}) \|}  &\geq \sigma_{\text{mix}} \| \btheta^*_0 - \btheta_{\text{mix}} \|
\end{split}
\end{align}

Since $\btheta_{\text{mix}}$ is the minimizer of $L_\text{mix}$ and $\btheta^*_0$ is the minimizer of $L^*_0$, their gradients are zero at those points. Thus, we get,

\begin{align}
    \frac{\alpha \left( \nabla L_p(\btheta_0^*)^\intercal \frac{(\btheta^*_0 - \btheta_{\text{mix}})}{\| \btheta^*_0 - \btheta_{\text{mix}}) \|}  \right) }{\alpha \sigma_p + (1-\alpha) \sigma_b} \geq \| \btheta^*_0 - \btheta_{\text{mix}} \|.
\label{eq:alpha_bound}
\end{align}
Taking the norm on both sides, we get the statement of the lemma.

\end{proof}

We can immediately notice that the L.H.S of Eq.~\ref{eq:alpha_bound} is an increasing function of $\alpha$.

% \para{Joint optimization of benign loss and adversarial loss.}
% During the training process of the poisoned model, the training process is essentially
% doing an optimization of both benign loss as shown in \S\ref{subsec:probset} and the adversarial loss:

% \begin{equation}
%   \begin{aligned}
%     \weight, b = \argmin_{\weight, b}(1-\alpha)\ell_{benign}(\weight, b) + \alpha \ell_{adv}(\weight, b)
%   \end{aligned}
% \end{equation}
% where $\ell_{benign}(\weight, b)$ is the same as $\ell(\weight, b)$ defined in Equation~\ref{eq:log_likelihood}
% and $\ell_{adv}(\weight, b)$ is defined as follows:

% \begin{align}
% \begin{split}
%     &\ell_{adv}(\weight, b) = -\log L_{adv}(\weight, b) \\
%     =& - \sum_{i=1}^{n}(1\cdot\log h(\patch(\instance_i)) + 0\cdot\log (1-h(\patch(\instance_i))))\\
%     =& -\sum_{i=1}^{n}\log h(\patch(\instance_i))
% \end{split}
%     \label{eq:adv_loss}
%   \end{align}

\backdoorremoval*

\begin{proof}[Proof of Theorem~\ref{thm:backdoorremoval}]
From Lemma 1 from \citet{rakhlin2011making}, we know 
\begin{align}
    \mathbb{E} \left [ \| \btheta_t - \btheta_0^* \| \right ] \leq \left ( 1-\frac{2}{t} \right ) \mathbb{E} \left [ \| \btheta_{t-1} - \btheta_0^* \|\right ] + \frac{\gamma_b^2}{\sigma_b^2 t^2} .
\end{align}
By induction, we get 
\begin{align}
    \mathbb{E} \left [ \| \btheta_T - \btheta_0^* \| \right ] \leq \frac{\max \{ \| \btheta_{\text{init}} - \btheta_0^* \| , \frac{\gamma_b^2}{\sigma_b^2} \}}{T}.
\end{align}

With $\btheta_{\text{init}}=\btheta_{\text{mix}}$, assuming that $\| \btheta_{\text{mix}} - \btheta_0^* \| > \frac{\gamma_b^2}{\sigma_b^2}$, and using the bound on $\| \btheta_{\text{mix}} - \btheta_0^* \|$ from Lemma~\ref{lemma:poison_slowdown}, we get

\begin{align}
    \mathbb{E} \left [ \| \btheta_T - \btheta_0^* \| \right ] \leq \frac{    \frac{\alpha \left( \nabla L_p(\btheta_0^*)^\intercal \frac{(\btheta^*_0 - \btheta_{\text{mix}})}{\| \btheta^*_0 - \btheta_{\text{mix}}) \|}  \right) }{\alpha \sigma_p + (1-\alpha) \sigma_b}}{T}
    \label{eq:final_bound}
\end{align}

The theorem is obtained by combining Eq.\ref{eq:final_bound} with the assumption on the $\gamma_b$-smooth nature of $L_{D_0}$.
\end{proof}

\begin{table}[]
  \centering
  \resizebox{0.48\textwidth}{!}{
\begin{tabular}{c|ccc}
\hline
Attack                                   & \begin{tabular}[c]{@{}c@{}}Training\\ Data\end{tabular} & \begin{tabular}[c]{@{}c@{}}Model\\Access\end{tabular} & \begin{tabular}[c]{@{}c@{}}Training\\ Process\end{tabular} \\ \hline
\cite{gu2017badnets}    & \checkmark                                      &                                                        &                                                                   \\
\cite{chen2017targeted} & \checkmark                                      &                                                        &                                                                   \\
\cite{liu2018trojaning}  &   \checkmark                                      &       \checkmark                                                 &      \checkmark                                                             \\
\cite{yao2019latent}  &    \checkmark                                     &       \checkmark                                                 &      \checkmark                                                      \\
\cite{nguyen2020input}  & \checkmark                                      &  \checkmark                                                      &        \checkmark                                                           \\
\cite{tang2020embarrassingly}  &                                      &    \checkmark                                                     &                                                                   \\
\cite{liu2020reflection}  &       \checkmark                               &    \checkmark                                                     &                                                                  \\
\cite{pang2020tale}  &          \checkmark                            &    \checkmark                                                     &        \checkmark                                                          \\
% \cite{wenger2021backdoor}  & \checkmark                                      &                                                        &                                                                   \\ \hline
\cite{nguyen2021wanet}  & \checkmark                                      &                                                        &                                                                   \\ \hline
\end{tabular}}
\caption{Summary of representative backdoor attacks and their threat models.}
\label{tab:attacks}
\end{table}

\begin{table}[t]
  \centering
  \begin{tabular}{l|l}
    \hline
  Transformation Types  &  Function Call \\ \hline
  Angle    & {\tt rotate($X$, $i \times p$)} \\
  Brightness    & {\tt adjust\_brightness($X$, $1+i \times p$)} \\
  Hue    & {\tt adjust\_hue($X$, $i \times p$)} \\
  Saturation    & {\tt adjust\_saturation($X$, $1+i \times p$)} \\\hline
\end{tabular}
  \caption{Function calls for generating training datasets $D_i$ for different data distribution drifts with given shift steps  $p$. $X$ is the original training data with no transformations.}
  \label{tab:shift_setup}
%\vspace{-0.2in}
\end{table}

\begin{figure}[h]
\centering
  \subfigure[\mnist{}]{
  \includegraphics[width=.2\linewidth]{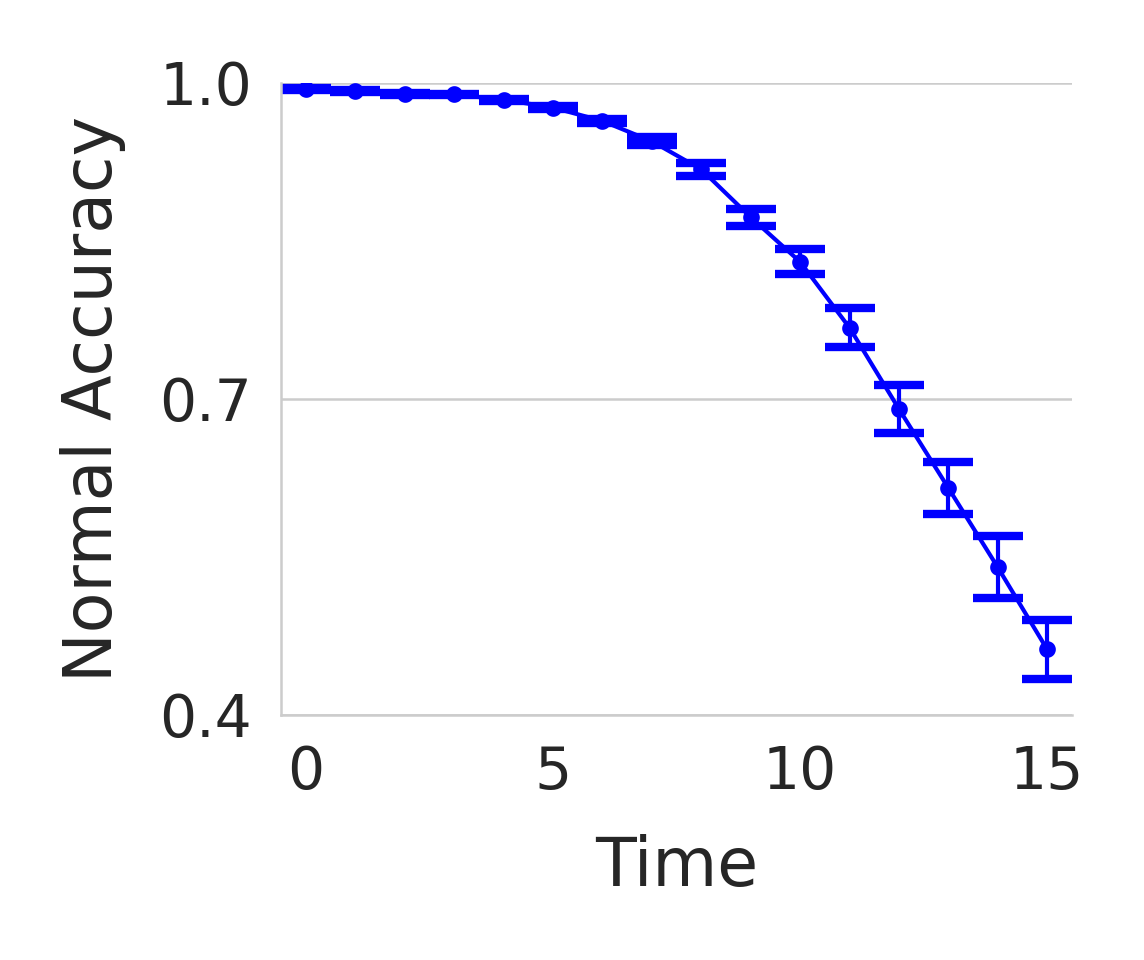}
  \label{fig:mnist_acc_drop}
  }
  \subfigure[\cifar{}]{
  \includegraphics[width=.2\linewidth]{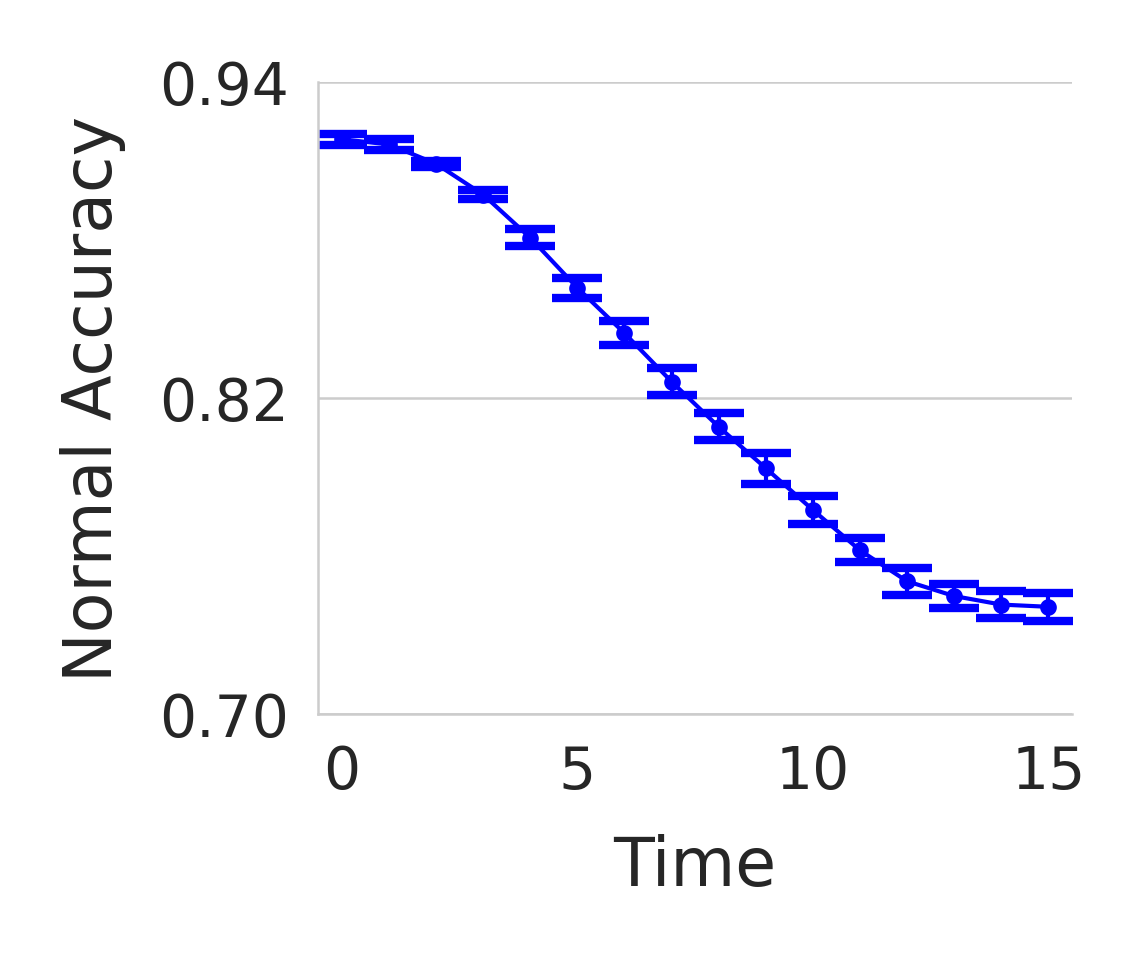}
  \label{fig:cifar10_acc_drop}
  }\vspace{-0.05in}
\caption{Normal accuracy for a static model when inference data distribution drifts.
For \cifar{} we change the hue by a factor of $0.02$ per drift step and for \mnist{} we change
the angle for $4^\circ$ per drift step. We report the mean with std for 5 models on each
dataset.}
\label{fig:acc_drop}
\end{figure}

\begin{figure}[h]
\centering
  \subfigure[One-shot poison in $D_0$]{
  \includegraphics[width=.45\linewidth]{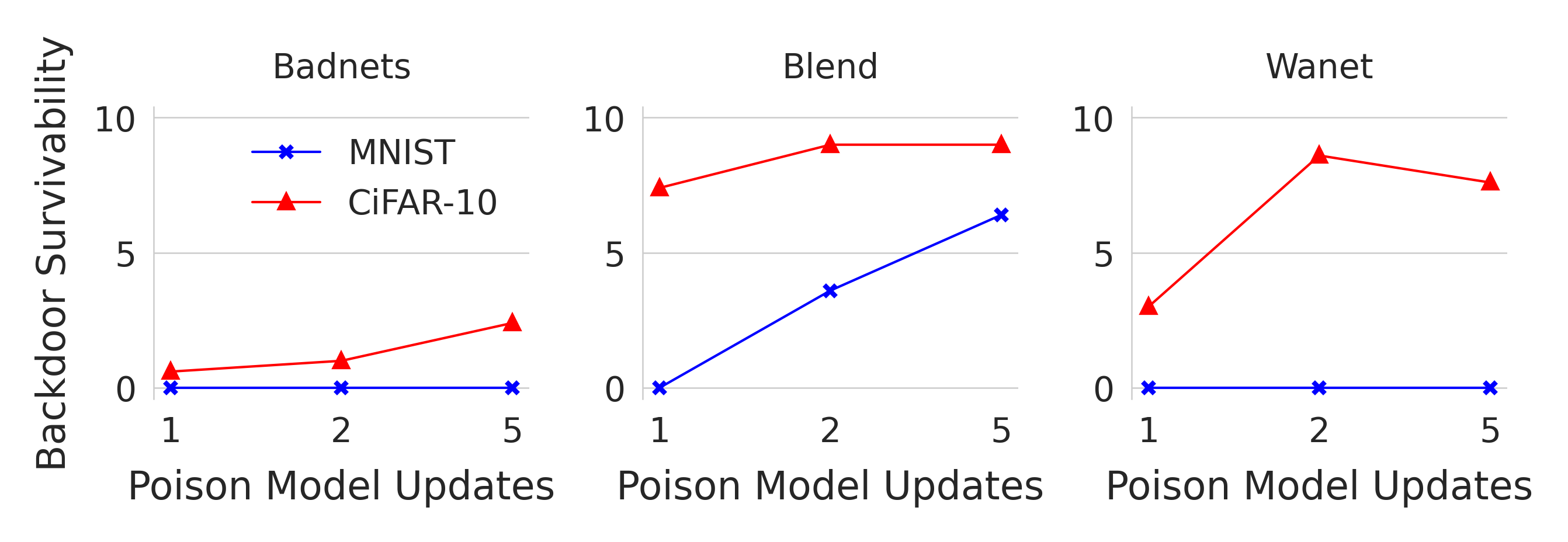}
  \label{fig:oneshot_D0_pu}
  \vspace{-0.1in}
  }
  \subfigure[One-shot poison in $D_1$]{
  \includegraphics[width=.45\linewidth]{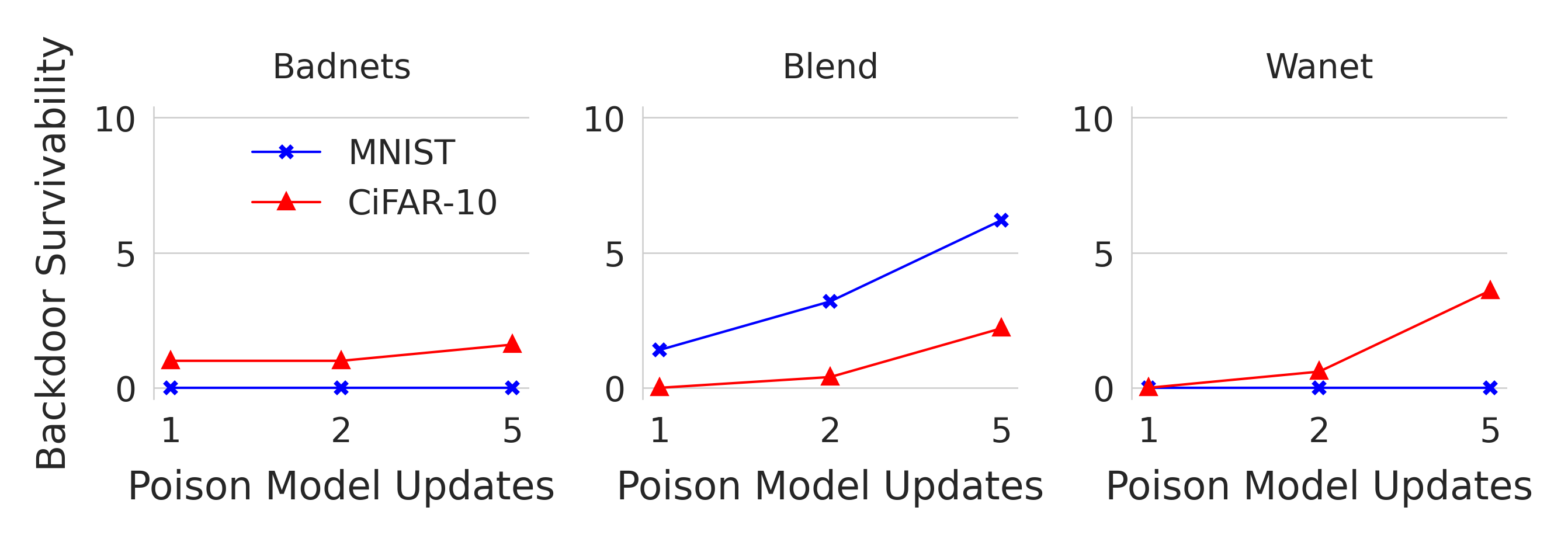}
  \label{fig:oneshot_D1_pu}
  \vspace{-0.1in}
  }\vspace{-0.05in}
\caption{Backdoor survivability for persistent poisoning with different poison
model updates for the 3 attack methods on \mnist{} and \cifar{}.
The results are averaged on 5 instances.}
\vspace{-0.1in}
\label{fig:pu}
\end{figure}

\begin{figure}[h]
\centering
  \subfigure[One-shot poison in $D_0$]{
  \includegraphics[width=.45\linewidth]{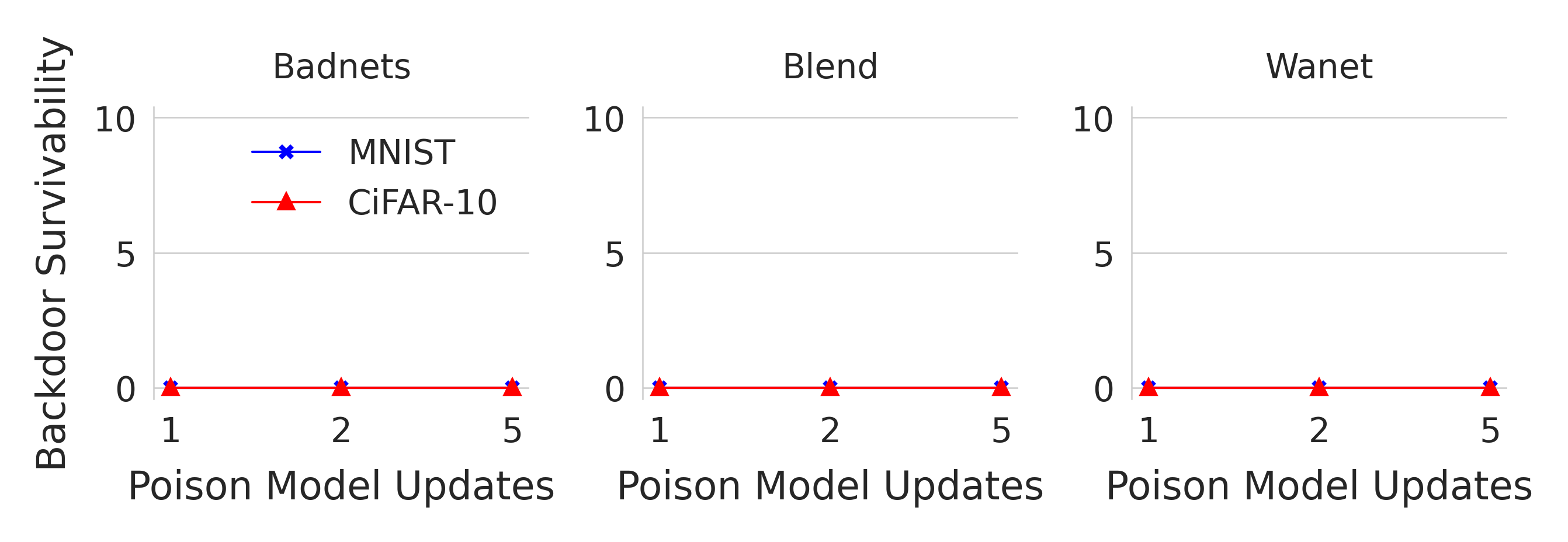}
  \label{fig:oneshot_D0_pu_stlr}
  \vspace{-0.1in}
  }
  \subfigure[One-shot poison in $D_1$]{
  \includegraphics[width=.45\linewidth]{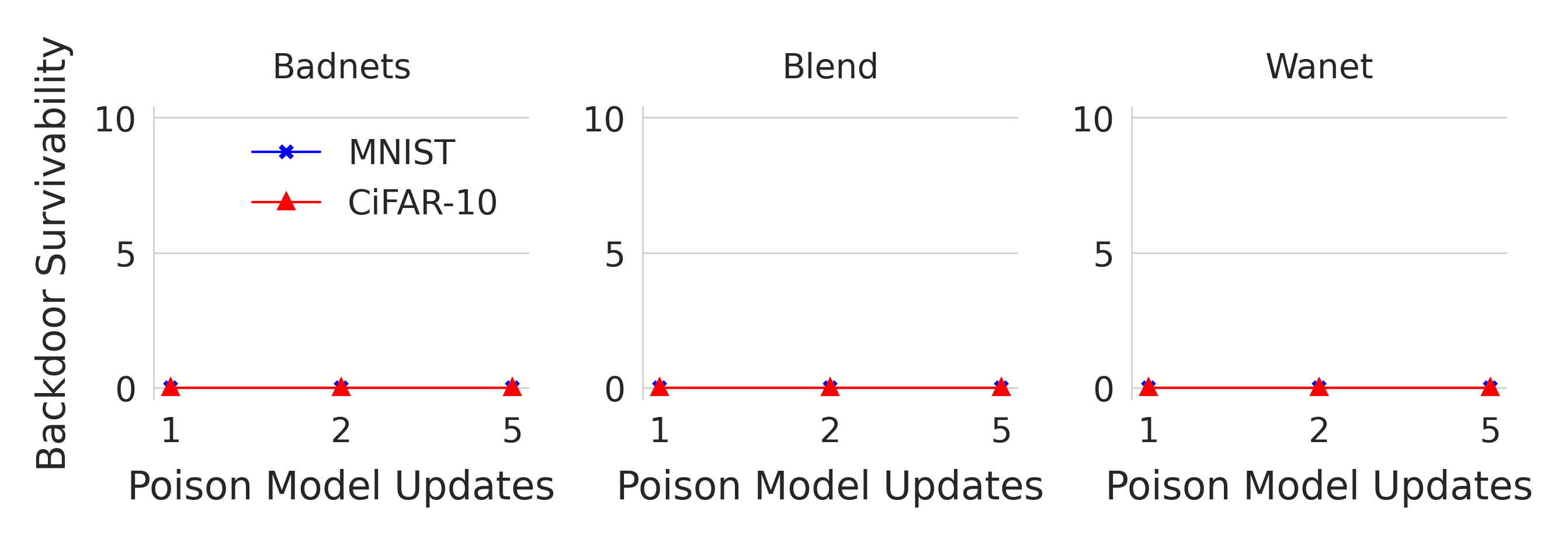}
  \label{fig:oneshot_D1_pu_stlr}
  \vspace{-0.1in}
  }\vspace{-0.05in}
\caption{Backdoor survivability for persistent poisoning with different poison
model updates with STLR.}
\vspace{-0.1in}
\label{fig:pu_stlr}
\end{figure}

\begin{figure}[h]
\centering
  \subfigure[One-shot poison in $D_0$]{
  \includegraphics[width=.45\linewidth]{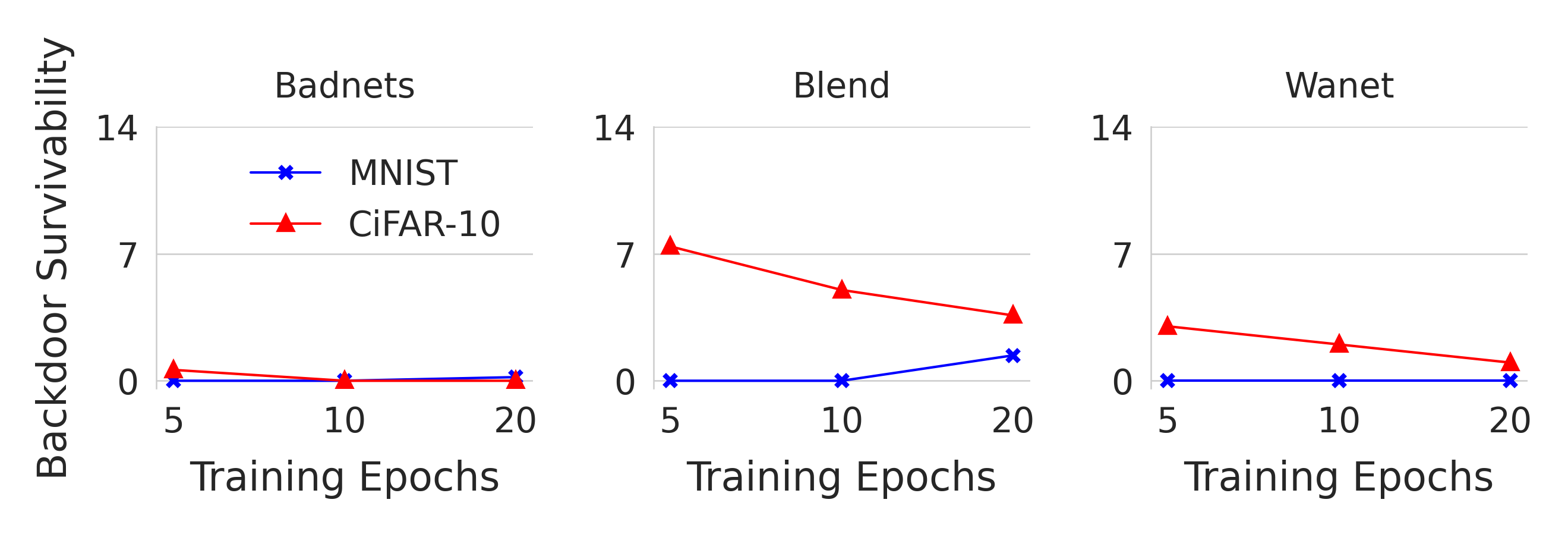}
  \label{fig:oneshot_D0_ne}
  \vspace{-0.1in}
  }
  \subfigure[One-shot poison in $D_1$]{
  \includegraphics[width=.45\linewidth]{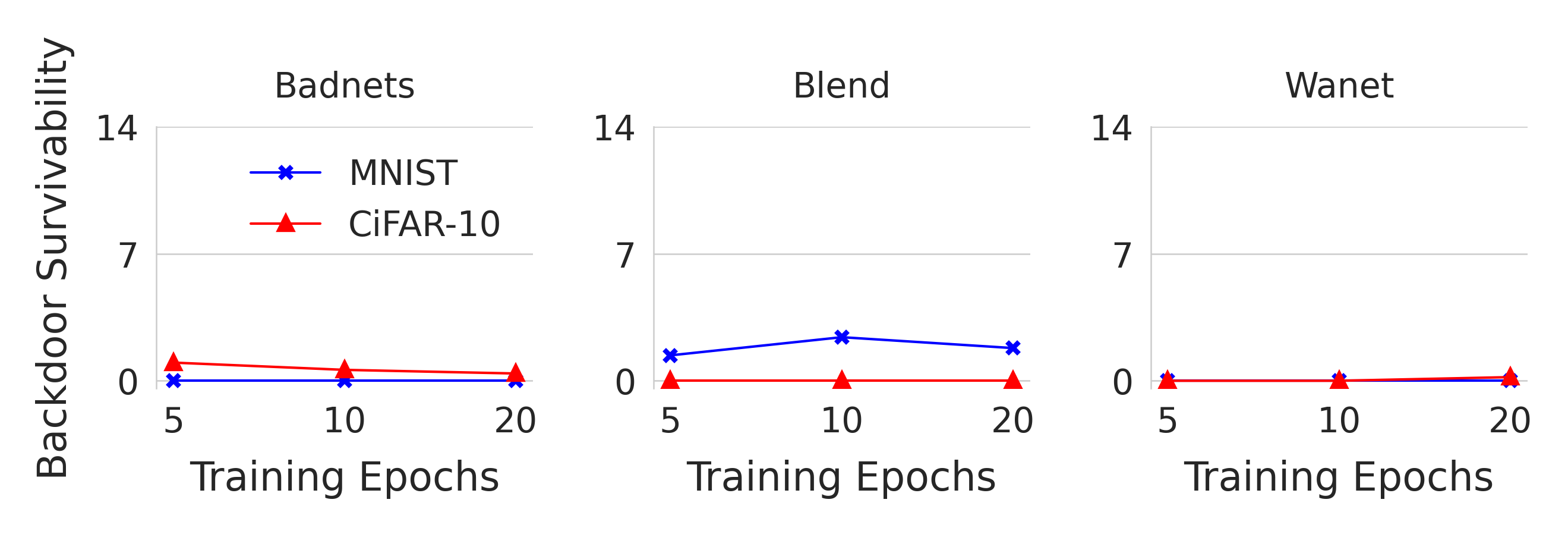}
  \label{fig:oneshot_D1_ne}
  \vspace{-0.1in}
  }\vspace{-0.05in}
\caption{Average backdoor survivability for persistent poisoning with different training
epochs during model updates for the 3 attack methods on \mnist{} and \cifar{}.}
\vspace{-0.2in}
\label{fig:ne}
\end{figure}

\begin{table}[h]
  \centering
\begin{minipage}[h]{0.45\textwidth}
\centering
  \resizebox{0.7\textwidth}{!}{
\begin{tabular}{c|c|c}
\hline
Models  & \nc{}   & \tabor{} \\ \hline
Clean   & 1.52 & 1.84  \\
Badnets & 3.13 & 2.12  \\
Blend   & 3.35 & 3.11  \\
Wanet   & 2.44 & 2.01  \\ \hline
\end{tabular}}
\caption{Average anomaly index for \nc{} and \tabor{} on clean models and backdoored models ($F_0$).
The attack success rate threshold is set to 99\%. For each type of models, we average the anomaly index over 5 models.}
\label{tab:defense_f0_99}
\end{minipage}
\hfill
\begin{minipage}[h]{0.45\textwidth}
\centering
  \resizebox{0.7\textwidth}{!}{
\begin{tabular}{c|c|c}
\hline
Models  & \nc{}   & \tabor{} \\ \hline
Badnets & {\bf 1.00} & {\bf 1.54}  \\
Blend   & {\bf 1.52} & 2.12  \\
Wanet   & {\bf 0.87} & {\bf 1.45}  \\ \hline
\end{tabular}}
\caption{Average anomaly index for \nc{} and \tabor{} on fine-tuned backdoored models after poison stops ($F_1$, the
backdoors are injected in $F_0$). The attack success rate threshold is set to 99\%. {\bf Bold numbers} indicate failures to
detect the backdoors.}
\label{tab:defense_f1_99}
\end{minipage}
\vfill
\vspace{5pt}
\begin{minipage}[h]{0.9\textwidth}
  \centering
  \resizebox{0.9\textwidth}{!}{
    \begin{tabular}{c|cc|cc|cc}
    \hline
    \multirow{3}{*}{\diagbox{Models}{Attack Success \\ Rate  Threshold}} & \multicolumn{2}{c|}{75\%}        & \multicolumn{2}{c|}{50\%}        & \multicolumn{2}{c}{25\%}         \\
                                                                        & \multicolumn{2}{c|}{}        & \multicolumn{2}{c|}{}        & \multicolumn{2}{c}{}         \\ \cline{2-7}
                                                                               & \multicolumn{1}{c|}{\nc}   & \tabor & \multicolumn{1}{c|}{\nc}   & \tabor & \multicolumn{1}{c|}{\nc}   & \tabor \\ \hline
    Clean                                                                     & \multicolumn{1}{c|}{1.26} & 1.90  & \multicolumn{1}{c|}{{\bf 2.27}} & 1.34  & \multicolumn{1}{c|}{1.53} & {\bf 3.31}  \\
    Badnets                                                                   & \multicolumn{1}{c|}{3.38} & 2.89  & \multicolumn{1}{c|}{3.23} & 2.06  & \multicolumn{1}{c|}{{\bf 1.70}} & 2.41  \\
    Blend                                                                     & \multicolumn{1}{c|}{2.29} & 2.46  & \multicolumn{1}{c|}{2.11} & 3.02  & \multicolumn{1}{c|}{{\bf 1.28}} & 2.01  \\
    Wanet                                                                     & \multicolumn{1}{c|}{{\bf 1.64}} & {\bf 1.86}  & \multicolumn{1}{c|}{{\bf 1.53}} & {\bf 1.52}  & \multicolumn{1}{c|}{{\bf 1.63}} & {\bf 1.51}  \\ \hline
  \end{tabular}}
\caption{Average anomaly index for \nc{} and \tabor{} on clean and backdoored models ($F_0$)
with different attack success rate threshold. {\bf Numbers in bold} indicate instances
where either the backdoors were not detected or false positive detections of backdoors were made on clean models.}
\label{tab:defense_f0}
\end{minipage}
\end{table}

\section{Details for Experimental Setup}
\label{sec:app_setup}
In this section, we provide further information about our experimental setup,
expanding upon the background from \S\ref{subsec: exp_setup} of the main paper.

\para{More details about distribution shifts.}
We select 4 types of image transformations, namely: i) changing angle, ii)
changing brightness, iii) changing hue, and iv) changing
saturation. These transformations also reflect practical
scenarios when the camera's color spectrum varies due to hardware aging or dust
accumulation or a deployed camera's view is shadowed by a new structure nearby
or the angle of the camera is rotated. We use these transformations to introduce fine-grained,
parameterized data distribution shifts over the sequence of changing
data distributions ($\{D_i\}_{i>0}$). When implementing the
transformations, we use PyTorch's built-in functions in {\tt
torchvision.transforms.functional} (see Table~\ref{tab:shift_setup}).

\para{More details on model training and updating.}
To train an initial model $F_0$, we choose ResNet-9  as
our default model architecture and train the model using $D_0$
with a batch-size of $512$. We train the model for 80 epochs for \cifar{} and 40 epochs for \mnist{} (note the 
initial training datasets for both tasks are half of the original training datasets).
By default, we use an SGD optimizer with momentum$=0.9$,
weight decay$=5e-4$. When using STLR, we set the maximum learning rate to $0.5$.
We also run experiments with two other model architectures (ResNet-18
and DenseNet-121) on \cifar{}. When training an initial model $F_0$, we train the model for 80 epochs using the same optimizer
settings and the STLR scheduler as the ResNet-9 experiments.

To update the model (i.e., from $F_{i-1}$ to $F_i$), our default updating setting is to fine-tune each model $F_{i-1}$
with the new training data $D_i$ for 5 epochs using an SGD optimizer with a constant learning rate$=0.01$, momentum$=0.9$ and
weight decay$=5e-4$. We set learning rate$=0.01$ since it produces the
best normal accuracy among our experiments.
In \S\ref{sec:training}, we also experiment with an SGD optimizer with STLR setting max learning rate$=0.5$, momentum$=0.9$ and
weight\_decay$=5e-4$. We vary the learning rates and training epochs in \S\ref{sec:training} and
report the normal accuracy if it degrades over $2\%$ compared to our initial setting.

\section{Additional Experimental Results.}
\label{sec:app_expr_results}

\vspace{-0.05in}
\subsection{Normal Accuracy Drop with Data Distribution Drifts.}
\label{subsec:acc_drop}
\vspace{-0.05in}

Figure~\ref{fig:acc_drop} shows how normal accuracy drops on a static model when
inference data distribution drifts over time. For \mnist{}, we rotate the test images
by $4^\circ$ each time, and for \cifar{}, we change the hue of test images by a factor
or 0.02 each time.

\vspace{-0.05in}
\subsection{Persistent Poisoning}
\label{subsec:persist}
\vspace{-0.05in}

We also consider powerful attackers who can continuously poison the training
data.  Clearly, if the attacker can poison a sufficient fraction of $D_i$,
$\forall i$, the backdoor will remain intact in the time-varying model (which
we validated empirically). Here, a more interesting question is: ``Does poisoning
more model updates make the backdoor survive longer once poison stops?''  Therefore,
we poison the model for more model updates (2, 5) and compare the backdoor survivability.
Figure~\ref{fig:pu} shows that in general poisoning more model updates will increase the
backdoor survivability, but there is no guarantee on this and the trend varies
with different attack methods on different datasets.
When using STLR, our results indicate that the backdoor survivability stays 0 for
persistent poisoning when the poison setting same as Figure~\ref{fig:pu}. The detailed results
can be found in Figure~\ref{fig:pu_stlr}.

\subsection{Number of Training Epochs.}
\label{subsec:app_te}
 Figure~\ref{fig:ne} shows that increasing the
 number of training epochs during model updates from 5 to 20 reduces the
 backdoor survivability in most cases. While we
 are multiplying the training efforts, the impact on the backdoor survivability is
 very limited. This is likely because although additional training epochs allows the
 model to learn more new data features (and thus forget existing features
 like the backdoor faster), the model weights might be trapped in the local
 minimum in DNNs. 

\vspace{-0.05in}
\subsection{Existing Defenses}
\label{subsec:defense}
\vspace{-0.05in}

We test \nc{} and \tabor{} with the three attacks (Badnets, Blend and Wanet) on \cifar{}
dataset. We first run the two defenses on clean and backdoored initial models ($F_0$s).
As shown in Table~\ref{tab:defense_f0_99}, when we set the attack success
rate threshold to 99\% as guided in the original papers~\cite{wang2019neural, guo2020towards},
both \nc{} and \tabor{} work well on all three attacks (both defenses suggest that
models with anomaly index over 2 are detected as backdoored). However, when we run
\nc{} and \tabor{} on the fine-tuned backdoored models after the poison stops, Table~\ref{tab:defense_f1_99}
shows that \nc{} and \tabor{}
fail to detect the backdoor models in most cases (only \tabor{} can detect the backdoored models
attacked by Blend).

A straightforward way of adapting the defenses is to decrease the attack success rate threshold to
a smaller threshold, like 75\%, 50\% or 25\%, since the attack success rate on fine-tuned models are lower than 99\%. The defender can set the threshold according to their need. However, our results imply that this direct adaptation does not work. Table~\ref{tab:defense_f0}
shows that when reducing the attack threshold, the detection results are unstable in two aspects:
1) the anomaly index for clean models may increase over 2 (attack success rate threshold as 50\% for
\nc{} and attack success rate threshold 25\% for \tabor{}); 2) the anomaly index for backdoored
model may drop below 2.

\end{document}

% This document was modified from the file originally made available by
% Pat Langley and Andrea Danyluk for ICML-2K. This version was created
% by Iain Murray in 2018, and modified by Alexandre Bouchard in
% 2019 and 2021 and by Csaba Szepesvari, Gang Niu and Sivan Sabato in 2022.
% Previous contributors include Dan Roy, Lise Getoor and Tobias
% Scheffer, which was slightly modified from the 2010 version by
% Thorsten Joachims & Johannes Fuernkranz, slightly modified from the
% 2009 version by Kiri Wagstaff and Sam Roweis's 2008 version, which is
% slightly modified from Prasad Tadepalli's 2007 version which is a
% lightly changed version of the previous year's version by Andrew
% Moore, which was in turn edited from those of Kristian Kersting and
% Codrina Lauth. Alex Smola contributed to the algorithmic style files.